\documentclass[12pt, draftclsnofoot,onecolumn]{IEEEtran}







 \usepackage{blindtext, graphicx}
 \usepackage{epstopdf}
 \usepackage{amsmath}
 \usepackage{amsfonts}
 \usepackage{amssymb}
  \usepackage{subcaption}
 \usepackage{tikz}
 \usetikzlibrary{arrows}
\usepackage{eqnarray,amsmath}

\usepackage{blindtext}
\usepackage{graphicx}
\usepackage{multirow}
\usepackage{url}
\usepackage[T1]{fontenc}
\usepackage{mathptmx}
\usepackage{amsthm}
\usepackage{dsfont}

\usepackage{amssymb}
\usepackage{algorithm,algorithmic}
\usepackage{caption,subcaption}
\newtheorem{thm}{Theorem}[section]
\newtheorem{lem}[thm]{Lemma}

\usepackage{pifont}
\usepackage{soul}
\theoremstyle{remark}
\newtheorem*{remark}{Remark}
%

\usepackage[labelformat=simple]{subcaption}

%
\usepackage{cite}

%
\ifCLASSINFOpdf
\else
\fi

\newcommand\blankfootnote[1]{%
  \let\thefootnote\relax\footnotetext{#1}%
  \let\thefootnote\svthefootnote%
}

\DeclareMathAlphabet{\mathcal}{OMS}{cmsy}{m}{n}
\SetMathAlphabet{\mathcal}{bold}{OMS}{cmsy}{b}{n}
\newcommand{\cmark}{\ding{51}}%
\newcommand{\xmark}{\ding{55}}%



\usepackage{eso-pic}
\newcommand\AtPageUpperMyleft[1]{\AtTextUpperLeft{
 \put(\LenToUnit{0\paperwidth},\LenToUnit{1.2cm}){
     \parbox{\textwidth}{\raggedright\fontsize{12}{9}\selectfont #1}}
 }}

\newcommand{\conf}[1]{
\AddToShipoutPictureBG*{
\AtPageUpperMyleft{#1}
}
}

\begin{document}


%

\title{RSSI-based Secure Localization in the Presence of Malicious Nodes in Sensor Networks}

\conf{This work has been submitted to the IEEE for possible publication. Copyright may be transferred
without notice, after which this version may no longer be accessible.} 
%
%
%

\author{Bodhibrata~Mukhopadhyay,
        Seshan~Srirangarajan,
        and~Subrat~Kar
\thanks{B. Mukhopadhyay, S. Srirangarajan and  S. Kar  are with the Department of Electrical Engineering, Indian Institute of Technology Delhi, New Delhi, India. S. Srirangarajan and  S. Kar are also associated with the Bharti School of Telecommunication Technology and Management, Indian Institute of Technology Delhi. B.~Mukhopadhyay is supported through the Visvesvaraya PhD Scheme Fellowship from the Ministry of Electronics and Information Technology (MEITY), Govt. of India. (e-mail:  bodhibrata@gmail.com, seshan@ee.iitd.ac.in, subrat@ee.iitd.ac.in).}
}

\maketitle

\begin{abstract}
The ability of a sensor node to determine its location in a sensor network is important in many applications. The infrastructure for the location-based services is an easy target for malicious attacks. We address scenarios where malicious node(s) attempt to disrupt, in an uncoordinated or coordinated manner, the localization process of a target node. We propose four techniques for secure localization: weighted least square (WLS), secure weighted least square (SWLS), and $\ell_1$-norm based techniques LN-1 and LN-1E, in a network that includes one or more compromised anchor nodes. WLS and SWLS techniques are shown to offer significant advantage over existing techniques by assigning larger weights to the anchor nodes that are closer to the target node, and by detecting the malicious nodes and eliminating their measurements from the localization process. In a coordinated attack, the localization problem can be posed as a plane fitting problem where the measurements from non-malicious and malicious anchor nodes lie on two different planes. LN-1E technique estimates these two planes and prevents disruption of the localization process. The Cramer-Rao lower bound (CRLB) for the position estimate is also derived. The proposed techniques are shown to provide better localization accuracy than the existing algorithms.
\end{abstract}

\begin{IEEEkeywords}
\noindent Received signal strength (RSS), uncoordinated attack, coordinated attack, Cramer\--Rao lower bound (CRLB), least square (LS), secure localization.
\end{IEEEkeywords}






\ifpdf
    \graphicspath{{pictures/}}
\else
    \graphicspath{{pictures/}}
\fi
\section{Introduction}
Rapid developments in the field of micro-electronics, integrated circuit fabrication, and embedded software have increased the computational power, lifetime, and sensing capabilities of wireless sensor nodes. A wireless sensor network (WSN) is formed by a collection of authenticated sensor nodes that communicate among themselves and cooperate for a common purpose. These nodes collect data and transmit it to the base station for further processing. As the monitoring area grows larger, the number of sensor nodes in a network increases and it becomes difficult to keep track of the locations of the sensor nodes manually. However, in the absence of accurate location information, the data from the sensors are not very useful. Localization methods based on different techniques have been presented in the literature including those based on optimization~\cite{localization_SDP_jiang,Salari_cooperative_loc,wang_loc_cooperative}, graph~\cite{graph_based,shen_locating_2016}, game theory~\cite{lee_distributed_2017}, least squares~\cite{kay1993fundamentals,Rappaport:2001:WCP:559977}, and fingerprinting~\cite{Jang_localization_survey,survey_fingerprint}.

Localization techniques are broadly divided into two categories: range-based techniques~\cite{ncc2015_bodhi,seshan_rssi,A_Survey_on_TOA,AoA_localization_with_RSSI_differences_of_directional_Jiang}, and range-free techniques~\cite{Range_free_localization_schemes_for_large_scalee_He,range_freelocalization_Stoleru1}. For determining the position of the target node, range-based techniques use the distance between the target node (node whose location is not known at the time of deployment) and the anchors (nodes whose locations are known at the time of deployment), whereas range-free techniques only use the connectivity information between the nodes. Scenarios where location-based services are used include monitoring activities and movement patterns in farm animals~\cite{WSN_cattle_sikka}, metropolitan air quality monitoring~\cite{air_quality_monitoring}, monitoring toddlers and elderly persons, tracking goods in the supply chain industry~\cite{Costa2013}, land slide detection~\cite{landslide_monitering}, and navigation tool for people in places such as shopping malls and airports.

These location-based services are not immune to security threats including cryptographic attacks such as unauthorized access or modification of the information. Attackers can also gain unauthorized access to the infrastructure responsible for carrying out the localization and/or modify the code on the sensor nodes and turn them into malicious nodes (non-cryptographic attacks). A malicious node does not provide correct information required for localization to the target node. Both types of attacks can result in erroneous location estimation of the target nodes. For determining its location, the target node requires certain information from its neighbouring nodes namely, received signal strength information~\cite{NCC2014_bodhi}, time of flight (for time of arrival (ToA) or time difference of arrival (TDoA)-based techniques), or connectivity (for techniques such as DV-Hop~\cite{DVhop_bao}) information. Various types of attacks on localization techniques have been studied in the literature~\cite{location_authen_gonzalez} such as impersonation (malicious node masquerading as an honest node), distance fraud (malicious node reporting information resulting in incorrect distance estimation such as by arbitrarily varying the transmit power level), time fraud (malicious nodes including incorrect time stamp into the packets sent to the target node), and Sybil attack (malicious node claiming multiple identities/locations representing multiple nodes). 

A non-malicious node can also seem to behave like a malicious node if the wireless link between this node and the target node is affected. This can happen if the direct path between the anchor and target node is obstructed, or the transmit antenna is damaged resulting in unusual variation in RSSI values. 
In addition, distance estimation can be affected due to environmental changes as the path loss exponent depends on temperature and humidity~\cite{Boano_temp_impact,Baccour_2012}. In such scenarios the localization process can be adversely affected.
%
%
A GPS signal simulator can initiate a GPS spoofing attack by generating incorrect GPS signals to deceive a GPS receiver~\cite{gps_spoofing_Kugler,geofencing}. Similarly, spoofed radio signals can disrupt the localization process in a WSN. 

In this paper, we present four RSSI-based secure localization techniques. We consider the scenario where a target node attempts to localize itself using RSSI values from its neighboring anchor nodes. It is assumed that the anchors transmit at a fixed predefined power level. However, some of the anchor nodes individually or collaboratively change their transmit power without informing the target node in order to disrupt the localization process.


We propose four secure localization techniques: weighted least squares (WLS)~\cite{mukhopadhyay2018robust}, secure weighted least squares (SWLS), and $\ell1$-norm  based localization techniques LN-1 and LN-1E for node location estimation in the presence of malicious anchor nodes using RSSI measurements. We consider both uncoordinated and coordinated attacks by malicious node(s) to disrupt the localization process. We present extensive performance evaluation of the proposed techniques and comparison with two existing secure localization methods, Grad-Desc~\cite{secure_loc_garg} and LMdS~\cite{median_sq_error_li}. We also derive the Cramer-Rao lower bound (CRLB) on the root mean square error (RMSE) of the position estimate under uncoordinated and coordinated attacks.

The rest of the paper is organized as follows. Section~\ref{Sec:related_work} discusses prior work in the area of secure localization techniques in WSN. Section~\ref{Sec:plb_formulation} presents the problem formulation and Section~\ref{Sec:pro_sec_loc} describes the proposed methods for secure localization. The CRLB for uncoordinated and coordinated attack is derived in Section~\ref{Sec:crlb}. Section~\ref{Sec:perf_eval} presents the performance evaluation of the proposed techniques under uncoordinated and coordinated attack. Section~\ref{Sec:conclusion} concludes the paper. 

\textbf{Notation:} Uppercase bold letters represent matrices and lowercase bold letters represent vectors. 
diag$(\cdot)$ represents a diagonal matrix. $\lVert\cdot\rVert_1$ and $\lVert\cdot\rVert_2$ represent the $\ell_1$-norm and $\ell_2$-norm, respectively. 
The cardinality of set $\mathcal{A}$ is denoted by card($\mathcal{A}$).
\section{Related Work}
\label{Sec:related_work}
Lazos and Poovendran~\cite{serloc_lazos} proposed a secure localization technique, named SEcure Range-independent LOCalization scheme (SeRLoc), in which the target node determines its position by using beacon information transmitted by both benevolent and malicious anchor nodes. SeRLoc was shown to be resilient to wormhole attack, Sybil attack, and compromise of network entities. 
Liu et al.~\cite{voting_secure_liu_2} proposed two range-based localization techniques namely attack-resistant minimum mean square estimation (ARMMSE) and a voting-based scheme. In ARMMSE, malicious anchor nodes are identified by examining the inconsistency among location references indicated by mean square error of estimation. In the voting-based scheme, the network area is divided into a grid and a node votes for a grid cell if the distance of the grid cell to an anchor node is approximately equal to the estimated distance between the target and anchor node. The target node location is estimated as the centroid of the cell with the highest number of votes.  

Li et al.~\cite{median_sq_error_li} considered two robust localization techniques namely, triangulation and RF-based fingerprinting. For triangulation, they proposed an adaptive least squares and least median of squares (LMdS) based location estimation, and for RF fingerprinting they used a median-based distance metric. In LMdS method, anchors are divided into many subsets of identical sizes with each subset estimating the target node location using least squares. The final target node location is given by the least squares location estimate of the subset with the smallest median residue. It was observed that this subset in an attack scenario is least likely to contain malicious nodes. However, it is assumed that the number of malicious nodes is less than $50\%$ of the number of anchor nodes. 

Garg et al.~\cite{secure_loc_garg} proposed an iterative gradient descent technique with inconsistent measurement pruning to achieve accurate localization in the presence of malicious nodes in a WSN. They consider mobile sensor networks where the nodes are mobile and some of them may be compromised and thus transmit false information. Assuming the measurement noise to be Gaussian the likelihood of the measurements is maximized. 
To account for the possibility of malicious nodes, the cost function is updated at each iteration by eliminating the anchor nodes with large residues from the localization process.

Jha et al.~\cite{secure_wns_gt_jha} implemented secure localization in WSN using a game theoretic approach. Their proposed technique  combines least trimmed square (LTS) and game theoretic aggregation (GTA) algorithms.  In the LTS phase, the technique learns the weights for each of the anchor nodes. These weights corresponds to the reputation of the anchor nodes i.e., nodes with lower weights are likely to be compromised and malicious. Robust localization is achieved using GTA by filtering out information from the malicious anchor nodes. 
%
%
\section{System Model}
\label{Sec:plb_formulation}
Consider a network with $N$ anchor nodes whose locations are known, and one or more target nodes whose locations are to be determined. It is assumed that all the anchor nodes are within the communication range of the target node. The nodes are assumed to transmit at a predefined power level. The target node measures RSSI values of the packets received from the anchor nodes and estimates its distance from each of the anchor nodes. Localization techniques allow the target node to estimate its location using the estimated distances and the known locations of the anchor nodes.

Assuming the signal power loss is dominated by path loss which can be modeled using the log-distance model~\cite{Rappaport:2001:WCP:559977}:
%
\begin{IEEEeqnarray}{lCr}\label{Eq:log_normal}
p^r=p_0 - 10n\log_{10}(d) 
\end{IEEEeqnarray}
where $p^r$ is the received power at the target node, $p_0$ is the transmit power of the anchor node, $n$ is the path loss exponent, and $d$ is the distance between the target and the anchor node.

Let coordinates of the target and $N$ anchor nodes be represented by $\mathbf{t}=\left[t^x,t^y\right]^T$ and $\mathbf{a}_i=\left[a_i^x,a_i^y\right]^T$  where $i=1,\dots,N$, respectively. The anchor nodes are assumed to broadcast packets at regular intervals. Let $p^r_{ij}$ represent RSSI value of the $j^{\text{th}}$ packet received from the $i^{\text{th}}$ anchor node.
Using (\ref{Eq:log_normal}) and the RSSI value from the $j^{\text{th}}$ packet, the target node computes its distance from the $i^{\text{th}}$ anchor node as $d_{ij}=10^\frac{\left(p_0-p_{ij}^{r}\right)}{10 n}$.
Next consider that some of the anchor nodes are malicious and attempt to disrupt the localization process. Two types of localization attacks are considered:~uncoordinated and coordinated attacks. In an uncoordinated attack, the malicious node(s) act independently and attempt to disrupt the localization process of the target node, whereas in a coordinated attack the malicious nodes coordinate \textit{among themselves} in order to make the target node appear to be located at a location different from its actual location. These attacks are illustrated in Fig.~\ref{fig:plb_f_attack} and can be modeled as described next~\cite{secure_loc_garg,secure_wns_gt_jha}.
\begin{figure}[]
\centering
\begin{subfigure}[b]{0.4\textwidth}
\includegraphics[width=1\linewidth,trim={12cm 0cm 9cm 0},clip=true]{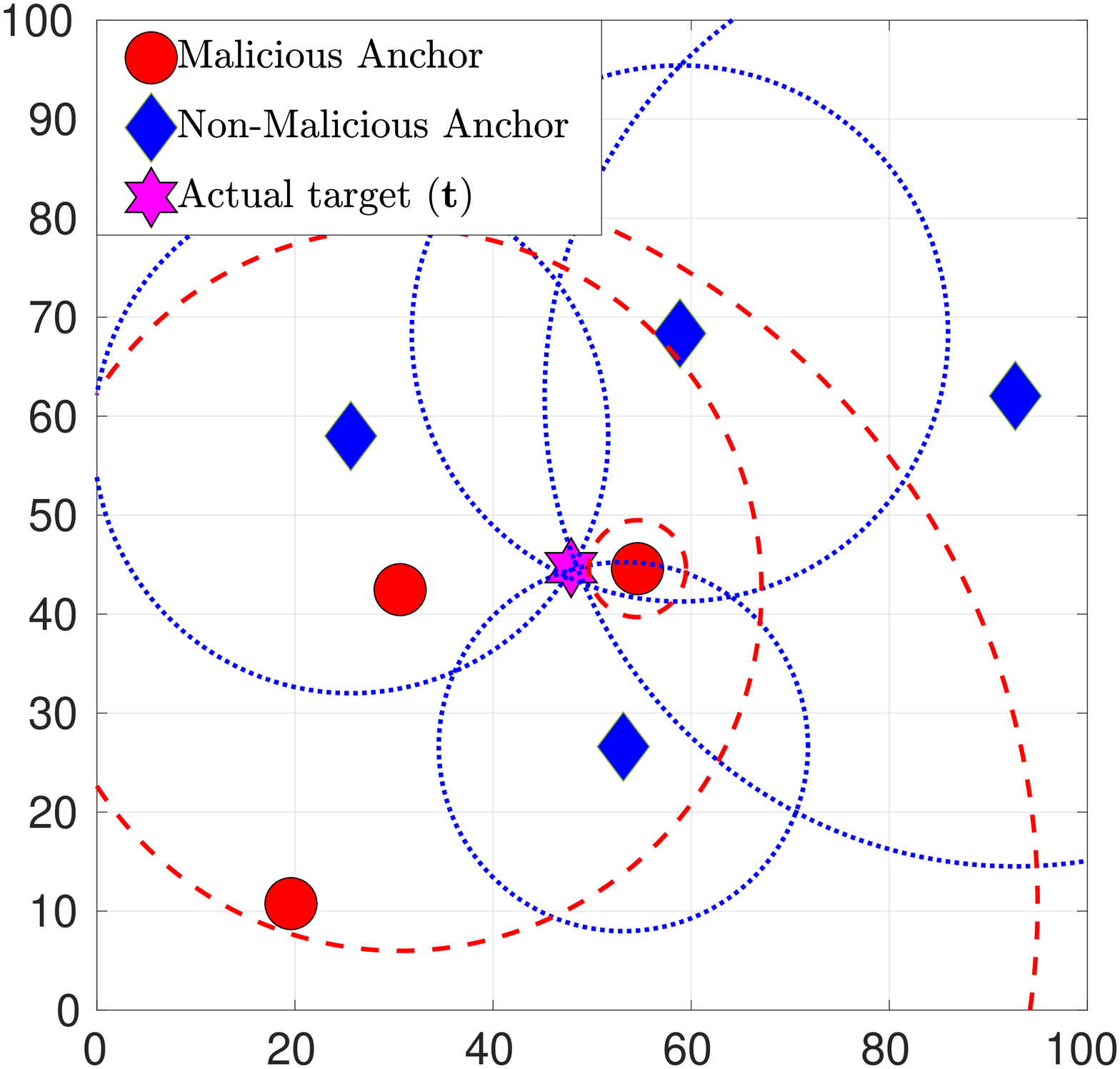}
\caption{Uncoordinated attack}
\label{fig:plb_nc_att}
\end{subfigure}
\begin{subfigure}[b]{0.4\textwidth}
\centering
\includegraphics[width=1\linewidth,trim={12cm 0cm 10cm 0},clip=true]{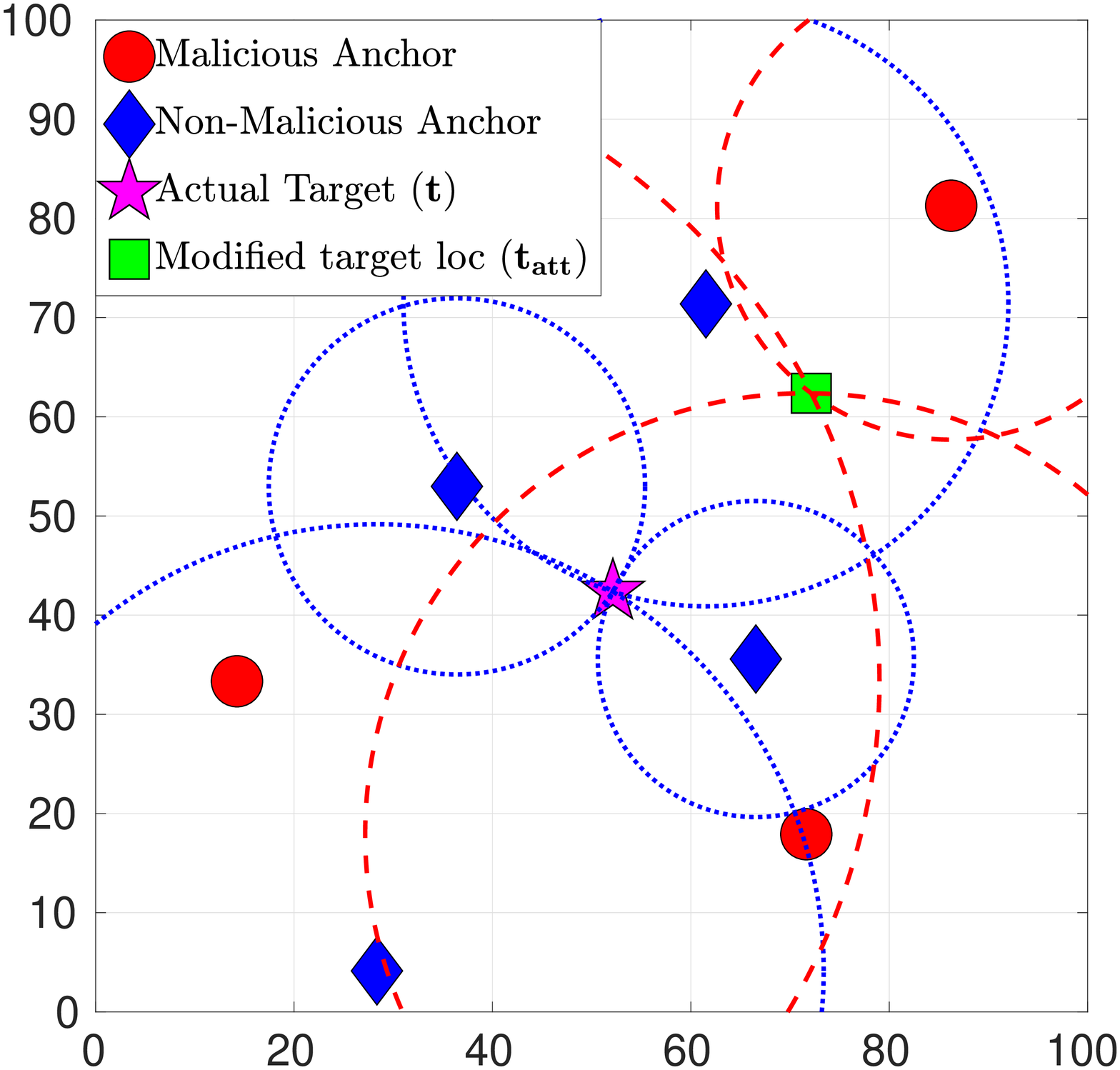}
\caption{Coordinated attack }
\label{fig:plb_c_att}
\end{subfigure}
\caption{Attack strategies by malicious anchor nodes in a WSN. The dashed circles represent the estimated distance between the target and malicious anchor nodes, and the dotted circles represent the estimated distance between the target and non-malicious anchor nodes in the absence of measurement noise.} 
\label{fig:plb_f_attack}
\end{figure}
\subsubsection{Uncoordinated attack}
\label{Sec:non_coordinated_attack}
We consider non-cryptographic attacks where the malicious nodes change their transmit power levels arbitrarily and do not report it to the target node. This type of attack can be modeled as:
\begin{equation}p^r_{i} = \left\{\,\begin{IEEEeqnarraybox}[][c]{l?s}
\IEEEstrut p_0 - 10n\log_{10}(d_i) + \eta & if node $i$ is non-malicious\\
 p_{0_i} - 10n\log_{10}(d_i) + \eta & if node $i$ is malicious,
\IEEEstrut
\end{IEEEeqnarraybox}
\right.\label{eq:model_non_coordinated}
\end{equation}
where $\eta\sim \mathcal{N}(0,\sigma^2)$ is a Gaussian random variable representing measurement noise, $d_i=\left\lVert \mathbf{t} - \mathbf{a}_i\right\rVert_2$ is the distance between the $i^{\text{th}}$ anchor node located at $ \mathbf{a}_i$ and the actual target node located at~$\mathbf{t}$, $p_0$ is the predefined transmit power of the anchor nodes, $p_{0_i}$ is the transmit power of the $i^{\text{th}}$ malicious anchor node, and $p^r_{i}$ is the received power at the target node from the $i^{\text{th}}$ anchor node. Let $p_{0_i}= p_0 + \kappa$ where $\kappa\sim\mathcal{N}(0,\sigma_{\text{att}}^2)$ is a Gaussian random variable with variance $\sigma_{\text{att}}^2$ representing the uncertainty introduced by the malicious anchor nodes in their transmit power levels to disrupt the localization process.

Fig.~\ref{fig:plb_nc_att} illustrates an uncoordinated attack where a target node attempts to localize itself using information from $7$ anchor nodes of which $3$ are malicious. The malicious anchor nodes change their transmit power levels dynamically and since the target node is not aware of their true transmit power $(p_{0_i})$, it incorrectly estimates its distance from the malicious anchor nodes resulting in large error in its estimated location.
%
%
\subsubsection{Coordinated attack}
\label{Sec:coordinated_attack}
Coordinated attacks are stronger attacks than the uncoordinated attacks as in coordinated attacks the malicious nodes communicate among themselves with the aim to make the target node appear to be located at a location different from its actual location. It is assumed that the malicious anchor nodes are aware of the actual location of the target node. The coordinated attack can be modeled as:    
\begin{equation}p^r_{i} = \left\{ \,\begin{IEEEeqnarraybox}[][c]{l?s}
\IEEEstrut p_0 - 10n\log_{10}(d_i) + \eta & if node $i$ is non-malicious,  \\
p_{0_i}^{c} - 10n\log_{10}( d_i) + \eta  & if node $i$ is malicious 
\IEEEstrut
\end{IEEEeqnarraybox}
\right.\label{eq:model_coordinated}
\end{equation}
where $d_i = \left\lVert\mathbf{t}-\mathbf{a}_i\right\rVert_2$, $p_{0_i}^{c} = p_0 - 10n\log_{10}(\chi_i)$ with $\chi_i=\frac{\left\lVert\mathbf{t}_{\text{att}}-\mathbf{a}_i\right\rVert_2}{\left\lVert\mathbf{t}-\mathbf{a}_i\right\rVert_2}$. $\mathbf{t}_{\text{att}}$ is the location where the malicious anchor nodes are trying to make the target node appear to be located, and $p_{0_i}^{c}$ is the transmit power of the $i^{\text{th}}$ malicious anchor node. Thus, $\chi_i$ is the factor by which the malicious anchor nodes scale the actual distance between themselves and the target node.     

A coordinated attack is shown in Fig.~\ref{fig:plb_c_att} where 3 malicious anchor nodes attempt to make the target node appear to be located at $\mathbf{t}_{\text{att}}$ instead of its actual location $\mathbf{t}$. It is seen that the dashed circles, with radius equal to the erroneous distance between the malicious anchor nodes and the target node, intersect at $\mathbf{t}_{\text{att}}$, whereas the dotted circles with radius equal to the actual distance between the non-malicious anchor nodes and the target node intersect at $\mathbf{t}$.
\section{Proposed techniques for secure localization}
\label{Sec:pro_sec_loc}
The variation in the received power ($p^r$) as a function of the distance ($d$) between the anchor and target node is shown in Fig.~\ref{fig:proposed_model}. From~(\ref{Eq:log_normal}): 
\begin{IEEEeqnarray}{lCr}\label{Eq:distance_est_noise_free}
d=10^{\left(\frac{p_0-p^r}{10 n}\right)} \Rightarrow d=c \ 10^\frac{-p^r}{10 n}
\end{IEEEeqnarray}
where $c=10^{\frac{p_0}{10n}}$. As the relationship between $p^r$ and $d$ is non-linear, variations in the received power affect the estimated distance in a non-linear manner. From Fig.~\ref{fig:proposed_model}, it is seen that the distance estimates are more robust to noise when the anchor nodes are close to the target node than when the anchor nodes are farther from the target node. Thus, anchor nodes closer to the target node should be given a larger weight in the localization process than anchor nodes that are farther.
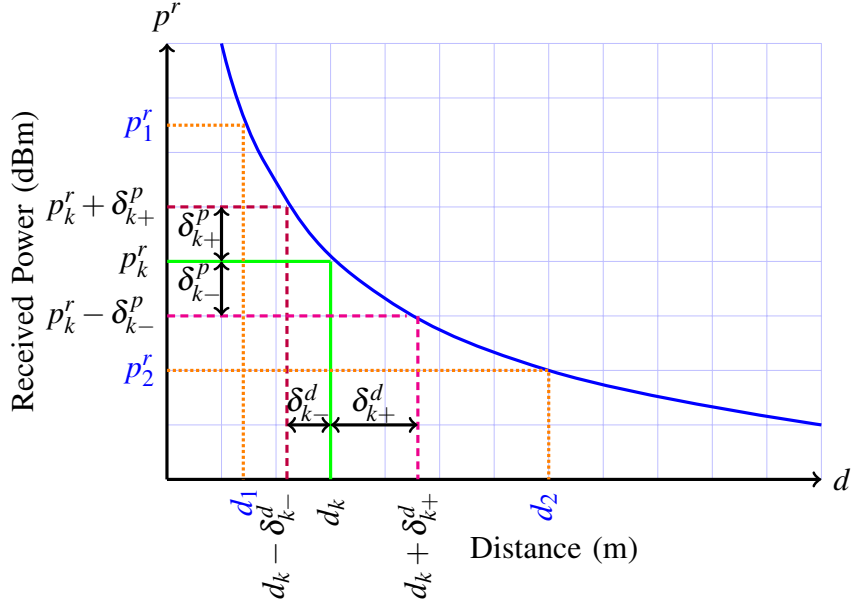
\begin{figure}[]
\centering
\begin{tikzpicture}[scale=1.45]

\def\xmin{0}
  \def\xmax{6}
  \def\ymin{0}
  \def\ymax{4}

  \draw[style=help lines,blue!25, ystep=.5, xstep=.5] (\xmin,\ymin) grid
  (\xmax,\ymax);
  \draw[->,very thick,black] (\xmin,\ymin) -- (\xmax,\ymin) node[right] {$d$};
  \draw[->,very thick,black] (\xmin,\ymin) -- (\xmin,\ymax) node[above] {$p^r$};
 

\draw [blue, very thick] plot[smooth, tension=.7,very thick] coordinates {(0.5,4) (0.9,2.9) (1.8,1.8) (3.5,1) (6,0.5)};

\node [black] at (3.55,-0.65) { {Distance (m)}};
\node [black,rotate=-270] at (-1.3,2) { {Received Power (dBm)}};

\node (v1) at (1.5,2.1) {};
\node (v2) at (1.5,-0.1) {};
\draw [green,very thick] (v1) edge (v2);

\node [black,rotate=-270] at (1.5,-0.3) { {$d_k$}};
\node [black,rotate=-270] at (1,-0.6) { {$d_k-\delta^d_{k-}$}};
\node [black,rotate=-270] at (2.3,-0.6) { {$d_k+\delta^d_{k+}$}};

\node (v3) at (2.3,1.6) {};
\node (v4) at (2.3,-0.1) {};
\draw [very thick,densely dashed,magenta] (v3) edge (v4);

\node (v5) at (1.1,2.6) {};
\node (v6) at (1.1,-0.1) {};
\draw [ very thick,densely dashed,purple]  (v5) edge (v6);

\node [black,rotate=0] at (-0.6,1.5) { {$p^r_k-\delta^{p}_{k-}$}};
\node [black,rotate=0] at (-0.3,2) { {$p^r_k$}};
\node [black,rotate=0] at (-0.6,2.5) { {$p^r_k+\delta^{p}_{k+}$}};

\node (v8) at (1.6,2) {};
\node (v7) at (-0.1,2) {};
\draw [very thick,green] (v7) edge (v8);

\node (v10) at (1.2,2.5) {};
\node (v9) at (-0.1,2.5) {};
\draw  [very thick, densely dashed,purple ](v9) edge (v10);

\node (v12) at (2.3,1.5) {};
\node (v11) at (-0.1,1.5) {};
\draw  [very thick, densely dashed,magenta ] (v11) edge (v12);
\node (v13) at (1.4,0.5) {};
\node (v14) at (2.4,0.5) {};
\node (v16) at (0.5,1.9) {};
\node (v15) at (0.5,2.6) {};

\draw [very thick,<->] (v13) edge (v14);
\draw [very thick,<->]  (v15) edge (v16);
\node [black,rotate=0] at (0.3,1.8) { {$\delta^p_{k-}$}};
\node [black,rotate=0] at (0.3,2.3) { {$\delta^p_{k+}$}};
\node [black,rotate=0] at (1.3,0.7) { {$\delta^d_{k-}$}};
\node [black,rotate=0] at (1.9,0.7) { {$\delta^d_{k+}$}};

\node [blue,rotate=0] at (-0.25,3.25) { {$p^r_{1}$}};
\node [blue,rotate=0] at (-0.25,1) { {$p^r_{2}$}};

\node [blue,rotate=-270] at (0.7,-0.2) { {$d_{1}$}};
\node [blue,rotate=-270] at (3.45,-0.25) { {$d_{2}$}};
\draw [densely dotted,very thick,orange](0,3.25) -- (0.7,3.25) -- (0.7,0);
\draw [densely dotted,very thick,orange](0,1) -- (3.5,1) -- (3.5,0);
\node (v17) at (1,0.5) {};
\node (v18) at (1.6,0.5) {};
\draw [<->,very thick,black] (v17) edge (v18);
\node (v19) at (0.5,2.1) {};
\node (v20) at (0.5,1.4) {};
\draw [<->,very thick,black] (v19) edge (v20);
\end{tikzpicture} 
\caption{Effect on distance estimation due to variations in the received power.}
\label{fig:proposed_model}
\end{figure}

In Fig.~\ref{fig:proposed_model}, a target node that receives packet with power $p^r_k$ from the $k^{\text{th}}$ anchor node is estimated to be at a distance $d_k$ from it. Positive and negative variations in the received power $\left(p^r_k\right)$ are represented by $\delta^p_{k+}$ and $\delta^p_{k-}$, and the corresponding variation in distance estimates are represented by $\delta^d_{k-}$ and $\delta^d_{k+}$. Lemma~\ref{lemma:dist_per} quantifies the variation in the estimated distance corresponding to a variation in the received power. 
\begin{lem}
\label{lemma:dist_per}
%
$\delta^d_{k-}= g\left(\delta^p_{k+}\right)10^{-\frac{p^r_k}{10n}}$ and $\delta^d_{k+}= -g\left(-\delta^p_{k-}\right)10^{-\frac{p^r_k}{10n}}$ where $g\left(x\right) = c\left(1 - 10^\frac{-x}{10n}\right)$ and $\delta^d_{k+},\delta^d_{k-},\delta^p_{k+},\delta^p_{k-}$ are all positive quantities.
\end{lem}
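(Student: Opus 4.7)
The plan is to derive both identities directly from the closed-form expression for the estimated distance in (\ref{Eq:distance_est_noise_free}) and exploit the fact that $d = c\,10^{-p^r/(10n)}$ is monotonically decreasing in $p^r$. Because the function is decreasing, a positive perturbation $\delta^p_{k+}$ in received power produces a decrease $\delta^d_{k-}$ in estimated distance, while a negative perturbation $-\delta^p_{k-}$ produces an increase $\delta^d_{k+}$; keeping this sign bookkeeping consistent will be the single place I have to be careful.

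First I would write the baseline estimate $d_k = c\,10^{-p^r_k/(10n)}$ and the two perturbed estimates
\begin{equation}
d_k - \delta^d_{k-} \;=\; c\,10^{-(p^r_k+\delta^p_{k+})/(10n)}, \qquad d_k + \delta^d_{k+} \;=\; c\,10^{-(p^r_k-\delta^p_{k-})/(10n)}.
\end{equation}
Each right-hand side factors as $d_k$ times a pure exponential in the perturbation, so subtracting the baseline $d_k$ isolates $\delta^d_{k-}$ and $\delta^d_{k+}$ as $d_k\bigl(1 - 10^{-\delta^p_{k+}/(10n)}\bigr)$ and $d_k\bigl(10^{\delta^p_{k-}/(10n)} - 1\bigr)$, respectively. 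Replacing the front factor $d_k$ by $c\,10^{-p^r_k/(10n)}$ and recognizing $c(1-10^{-x/(10n)}) = g(x)$ yields the first identity immediately, while the second identity follows after rewriting $10^{\delta^p_{k-}/(10n)} - 1 = -\bigl(1 - 10^{-(-\delta^p_{k-})/(10n)}\bigr)$ so that the argument of $g$ becomes $-\delta^p_{k-}$, producing the required minus sign.

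Finally I would verify positivity: since $\delta^p_{k+},\delta^p_{k-}>0$ and $n>0$, we have $10^{-\delta^p_{k+}/(10n)} < 1$, so $g(\delta^p_{k+})>0$ and hence $\delta^d_{k-}>0$; similarly $g(-\delta^p_{k-})<0$, so $-g(-\delta^p_{k-})>0$ and $\delta^d_{k+}>0$, which confirms the stated sign conventions. There is no genuine obstacle; the argument is a two-line calculation plus a sign check, and the only subtlety is being disciplined about which direction of power perturbation induces which direction of distance perturbation.
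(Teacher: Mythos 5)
Your proposal is correct and follows essentially the same route as the paper: write the perturbed distance estimates $d_k-\delta^d_{k-}=c\,10^{-(p^r_k+\delta^p_{k+})/(10n)}$ and $d_k+\delta^d_{k+}=c\,10^{-(p^r_k-\delta^p_{k-})/(10n)}$, subtract the baseline, and factor out $10^{-p^r_k/(10n)}$ to recognize $g$. The only difference is that you additionally verify the positivity of $g(\delta^p_{k+})$ and $-g(-\delta^p_{k-})$, which the paper's proof leaves implicit; this is a harmless and slightly more complete presentation of the same two-line calculation.
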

\begin{proof}
$d_k-\delta^d_{k-} = c 10^{-\frac{p^r_k+\delta^p_{k+}}{10n}},
\delta^d_{k-} = c  10^\frac{-p^r_k}{10 n} - c 10^{-\frac{p^r_k+\delta^p_{k+}}{10n}}  = 10^{-\frac{p^r_k}{10n}} c \left(1 - 10^\frac{-\delta^p_{k+}}{10n}\right) = 10^{-\frac{p^r_k}{10n}}g\left(\delta^p_{k+}\right)$. Similarly, $d_k+\delta^d_{k+} = c 10^{-\frac{p^r_k-\delta^p_{k-}}{10n}} \Rightarrow \delta^d_{k+} = - 10^{-\frac{p^r_k}{10n}}g\left(-\delta^p_{k-}\right)$.
\end{proof}
\begin{lem}
\label{lemma:same_power_two_diff_dist}
For the same amount of variation ($\delta_k^p$) on the received power $p^r_k$, the received power with negative deviation ($p^r_k - \delta_k^p$) will result in larger variation in the distance estimate than the received power with positive deviation ($p^r_k + \delta_k^p$) i.e., if $\delta^p_{k+} = \delta^p_{k-}=\delta^p_k$, then  $\delta^d_{k+} > \delta^d_{k-}$.
\end{lem}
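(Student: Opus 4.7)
The plan is to substitute the expressions from Lemma~\ref{lemma:dist_per} into the inequality $\delta^d_{k+} > \delta^d_{k-}$ and reduce it to an elementary inequality about the exponential function. Setting $\delta^p_{k+} = \delta^p_{k-} = \delta^p_k$, Lemma~\ref{lemma:dist_per} gives
\begin{equation*}
\delta^d_{k-} = c\bigl(1 - 10^{-\delta^p_k/(10n)}\bigr)10^{-p^r_k/(10n)}, \qquad \delta^d_{k+} = c\bigl(10^{\delta^p_k/(10n)} - 1\bigr)10^{-p^r_k/(10n)}.
\end{equation*}
Since the common positive factor $c\,10^{-p^r_k/(10n)}$ can be divided out, the claim reduces to showing
\begin{equation*}
10^{\delta^p_k/(10n)} - 1 \; > \; 1 - 10^{-\delta^p_k/(10n)}.
\end{equation*}

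Next I would introduce the substitution $y := 10^{\delta^p_k/(10n)}$. Because $\delta^p_k > 0$ and $n > 0$, we have $y > 1$. The inequality becomes $y - 1 > 1 - 1/y$, which, after multiplying by $y > 0$, is equivalent to $y(y-1) > y - 1$, i.e.\ $(y-1)^2 > 0$. This holds since $y \neq 1$. Tracing back through the substitution then establishes the lemma.

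I don't expect a real obstacle here: the result is essentially a statement that the mapping $p^r \mapsto d$ in~(\ref{Eq:distance_est_noise_free}) is strictly convex, so a symmetric perturbation in $p^r$ produces an asymmetric perturbation in $d$, with the upward deviation being the larger one. The only point to be careful about is the sign bookkeeping in Lemma~\ref{lemma:dist_per} (note the minus sign in $\delta^d_{k+} = -g(-\delta^p_{k-})10^{-p^r_k/(10n)}$), which needs to be handled correctly so that both sides of the reduced inequality are manifestly positive before the $(y-1)^2 > 0$ step.
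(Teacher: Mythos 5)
Your proof is correct and follows essentially the same route as the paper: both substitute the expressions from Lemma~\ref{lemma:dist_per}, cancel the common positive factor $c\,10^{-p^r_k/(10n)}$, and reduce the claim to the perfect-square inequality $(y-1)^2>0$ with $y=10^{\delta^p_k/(10n)}$ (the paper packages this as the identity $g(x)+g(-x)=-(c\,10^{-x/(10n)})(10^{x/(10n)}-1)^2\leq 0$). If anything, your version is slightly cleaner on the strictness, since $y>1$ for $\delta^p_k>0$ gives the strict inequality claimed in the lemma, whereas the paper only concludes $\delta^d_{k+}\geq\delta^d_{k-}$.
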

\begin{proof}
Using Lemma~\ref{lemma:dist_per} it can be shown that
%
\begin{IEEEeqnarray}{lcr}
\label{Eq:g_x_plus_g_x_is_neg}
g(x)+g(-x)=c\left( 2 - 10^{\frac{-x}{10n}}- 10^{\frac{x}{10n}}\right) = -\underbrace{\left(c10^{\frac{-x}{10n}} \right)}_{a} \underbrace{\left( 10^{\frac{x}{10n}} - 1 \right)^2}_{b} \leq 0
\end{IEEEeqnarray}
$g(x)+g(-x)$ is always negative because $a$ and $b$ are always positive for all $x \in \mathbb{R}$. 

The variation in distance corresponding to the negative and positive change in received power $p^r_k$ are given by $d_k+\delta_{k+}^d$ and $d_k-\delta_{k-}^d$, respectively. From Lemma~\ref{lemma:dist_per}:
\begin{IEEEeqnarray}{lCr}
\label{Eq:deviation_distance}
\delta_{k+}^d = -g(-\delta^p_k)10^{\frac{p^r_k}{10n}} \text{ and } \delta_{k-}^d = g(\delta^p_k)10^{\frac{p^r_k}{10n}}
\end{IEEEeqnarray}
Using (\ref{Eq:g_x_plus_g_x_is_neg}) and (\ref{Eq:deviation_distance}), it is seen that  $\delta_{k+}^d \geq \delta_{k-}^d$ when $\delta^p_{k+} = \delta^p_{k-}=\delta^p_k$ (refer Fig.~\ref{fig:proposed_model}).
\end{proof}
Lemma~\ref{lemma:two_diff_dist_per} states that the distance estimation using RSSI values from anchor nodes that are farther from the target node is less robust to variation in the received power.
\begin{lem}
\label{lemma:two_diff_dist_per}
For the same amount of variation at two different received power levels, the lower received power level will result in larger variation in the distance estimate than the higher received power level i.e., if $p^r_{1} > p^r_{2}$ and $\delta_1^p = \delta^p_2, \ \text{then} \ \delta^d_{2} > \delta^d_{1}$.
\end{lem}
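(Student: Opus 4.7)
My plan is to reduce the claim almost immediately to Lemma~\ref{lemma:dist_per} plus the monotonicity of an exponential. Lemma~\ref{lemma:dist_per} gives, for any anchor $k$,
\begin{equation*}
\delta^d_{k-} \;=\; g(\delta^p_{k+})\,10^{-p^r_k/(10n)}, \qquad
\delta^d_{k+} \;=\; -g(-\delta^p_{k-})\,10^{-p^r_k/(10n)},
\end{equation*}
so the distance deviation factors cleanly into a piece depending only on the power perturbation $\delta^p_k$ and a piece $10^{-p^r_k/(10n)}$ depending only on the received power level. This is exactly the structure the lemma needs.

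Under the hypothesis $\delta^p_1 = \delta^p_2 =: \delta^p$, the perturbation-dependent factors $g(\delta^p)$ (or $-g(-\delta^p)$, for the positive-deviation case) are identical for the two anchors. Hence the ratio of distance deviations reduces to
\begin{equation*}
\frac{\delta^d_2}{\delta^d_1} \;=\; \frac{10^{-p^r_2/(10n)}}{10^{-p^r_1/(10n)}} \;=\; 10^{(p^r_1 - p^r_2)/(10n)}.
\end{equation*}
Since $n > 0$ and $p^r_1 > p^r_2$, the exponent is positive, so this ratio is strictly greater than $1$, giving $\delta^d_2 > \delta^d_1$ as required. The argument is agnostic to whether the perturbation is positive or negative, since the same $10^{-p^r_k/(10n)}$ factor appears in both cases in Lemma~\ref{lemma:dist_per}.

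The only point needing a line of justification is that the common factor is positive, so that the strict inequality survives: for $\delta^p > 0$, $g(\delta^p) = c(1 - 10^{-\delta^p/(10n)}) > 0$, and similarly $-g(-\delta^p) = c(10^{\delta^p/(10n)} - 1) > 0$. After this small check, no real obstacle remains, and the conclusion follows from strict monotonicity of $x \mapsto 10^{-x/(10n)}$. In short, the lemma is essentially a corollary of Lemma~\ref{lemma:dist_per}: the anchor-specific dependence of $\delta^d_k$ on the received power is captured entirely by the factor $10^{-p^r_k/(10n)}$, and that factor is larger for the smaller power.
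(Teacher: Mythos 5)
Your proof is correct and follows essentially the same route as the paper's: both reduce the claim to the factorization from Lemma~\ref{lemma:dist_per} and the strict monotonicity of $x \mapsto 10^{-x/(10n)}$. Your version is marginally more careful in explicitly verifying that the common perturbation-dependent factor is positive (which the paper leaves implicit), but this is a presentational refinement rather than a different argument.
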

\begin{proof}
If variation in the received power levels is positive i.e., $p^r_1 + \delta_1^p$ and $p^r_2 + \delta_2^p$ where $\delta_1^p = \delta^p_2$, then the relationship between the corresponding variations in distance estimates is $\delta^d_{2} > \delta^d_{1}$. Using Lemma~\ref{lemma:dist_per}:
\begin{IEEEeqnarray}{lCr}
p^r_{1} > p^r_{2} \Rightarrow 
g\left(\delta^p_2\right) 10^{\frac{-p^r_{2}}{10n}} > g\left(\delta^p_1\right) 10^{\frac{-p^r_{1}}{10n}} 
 \Rightarrow \delta^d_{2} > \delta^d_{1}
\end{IEEEeqnarray}
This relationship also holds if the variation in the received power levels is negative.
%
\end{proof}

From Lemma~\ref{lemma:two_diff_dist_per}, we can say that the malicious nodes that are closer to the target node can have a greater impact in disrupting the localization process than nodes that are farther. To reduce the effect of the malicious anchor nodes we next propose secure localization techniques.
\subsection{Weighted Least Square (WLS)}
We propose a secure localization technique based on the weighted least squares algorithm~\cite{mukhopadhyay2018robust}. This is a modified version of the least squares~\cite{kay1993fundamentals} where the anchor nodes are assigned weights based on their distance from the target node. The target receives $P$ packets from each of the anchor nodes and computes the mean received power as $\overline{p_i^r}$~$=\frac{1}{P}\sum_{j=1}^{P} p^r_{ij}$.

Using $\overline{p_i^r}$, the target node estimates its distance from the $i^{\text{th}}$ anchor node as $\overline{d_i}$. Given $\mathbf{a}_i$ and $\overline{d_i}$ for $i=1,\dots,N$, the target node position estimation can be formulated as a weighted least squares problem where each of the distance estimates is weighted by the variance of $\overline{d_i}^2$. Assuming the distance estimate to be a random variable, its cumulative distribution function (CDF) is: 
\begin{IEEEeqnarray}{rCl}
\label{Eq:CDF_distance}
 P (d_i\leq\gamma) = P\left(\frac{\eta}{\sigma}\leq\frac{p_i^r - p_0 + 10n\log_{10}\gamma}{\sigma}\right) = 1-Q\left(f(\gamma)\right)\nonumber
\end{IEEEeqnarray}
where $Q(\cdot)$ is the Q-function and $f(\gamma) = \frac{10n}{\sigma}\log_{10}\left(\frac{\gamma}{d_i}\right)$. The probability density function (PDF) of $d_i$ can be shown to be:
\begin{IEEEeqnarray}{l}
\label{Eq:pdf}
    f_{\text{d}_i}(\gamma)= \frac{5 n}{\gamma\sigma\ln(10)}\sqrt{\frac{2}{\pi}}\exp\left(-\frac{50 n^2\ln^2\left(\frac{\gamma}{d_i}\right)}{\sigma^2\ln^{2}(10)}\right)
\end{IEEEeqnarray}
Using~\eqref{Eq:pdf},  the variance of $d_i$ and $d_i^2$ can be shown to be: 
\begin{IEEEeqnarray}{l}
\label{Eq:var_d}
f^{\text{Var}}_{\text{d}_i}(d_i,\sigma)= d_i^2\exp\left(\frac{\sigma^2 }{18.86n^2}\right)\left[\exp\left(\frac{\sigma^2 }{18.86n^2}\right)-1\right]\\
\label{Eq:var_d^2}
f^{\text{Var}}_{\text{d}_i^2}(d_i,\sigma)= d_i^4\exp\left(\frac{\sigma^2}{4.715n^2}\right)\left[\exp\left(\frac{\sigma^2 }{4.715n^2}\right) -1 \right]
\end{IEEEeqnarray}
 
The measurement matrix $\mathbf{P}^\text{r} \left( =\left[p^{r}_{ij}\right]  \text{ where } i=1,\dots,N \text{ and } j=1,\dots,P \right)  $ consists of the RSSI values of the $P$ packets received by the target node from each of the $N$ anchor nodes (both malicious and non-malicious nodes). Consider a diagonal weighing matrix $W$ whose elements are inverse of the variance of $d_i^2$. The measurement model can be expressed as $\mathbf{At}+\mathbf{\epsilon}=\mathbf{b}$ where $\mathbf{A}$ and $\mathbf{b}$ are given in terms of the anchor node positions ($\mathbf{a}_i$) and distance estimates ($\overline{d_i}$) as:
\begin{IEEEeqnarray}{l}
\label{Eq:A_B}
\mathbf{A}=
\begin{pmatrix}
-2a_1^x & -2a_1^y & 1 \\
-2a_2^x & -2a_2^y &  1 \\
\vdots & \vdots & \vdots \\
-2a_N^x & -2a_N^y & 1
\end{pmatrix},\,
\mathbf{b}=
\begin{pmatrix}
\overline{d_1}^2 - (a_1^x)^2 - (a_1^y)^2 \\
\overline{d_2}^2 - (a_2^x)^2 - (a_2^y)^2 \\
\vdots \\
\overline{d_N}^2 - (a_N^x)^2 - (a_N^y)^2 \\
\end{pmatrix}
\end{IEEEeqnarray} 
Assuming the measurement noise $\mathbf{\epsilon}$ to be Gaussian distributed, estimation of the target node position can be formulated as a weighted least squares problem and Algorithm~\ref{Algo:WLMS} presents the WLS secure localization technique.
\begin{algorithm}
\caption{WLS Localization}
\label{Algo:WLMS}
\begin{algorithmic}[1]
\renewcommand{\algorithmicrequire}{\textbf{Input:}}
\renewcommand{\algorithmicensure}{\textbf{Output:}}
\REQUIRE $\mathbf{P}^\text{r}$, $\mathbf{a}_i$ ($i=1, \dots, N$), $\sigma$
\ENSURE $\hat{\mathbf{t}}=\left[t^x,t^y\right]^T$\\
\textit{Initialize}: $\mathbf{A},\mathbf{b}$
\STATE $\overline{\mathbf{p}^{\text{r}}}=\frac{1}{P}\left[\sum_{j=1}^P p^r_{1j}, \ \sum_{j=1}^P p^r_{2j},\dots,\sum_{j=1}^P p^r_{Nj}\right]^T$
\STATE $\overline{\mathbf{d}}=\left[\overline{d_1},\overline{d_2},\dots,\overline{d_N}\right]^T = 10^{\frac{\left({p_{0}}-\overline{\mathbf{p}^{\text{r}}}\right)}{10n}}$
\STATE $\sigma_{\text{d}^2}^2(i) = f^{\text{Var}}_{\text{d}_i^2} \left(\overline{d_i},\sigma\right), i=1,..,N$ (using~\eqref{Eq:var_d^2}) 
\STATE $\mathbf{W}$ = diag$ \left (1/{\sigma_{\text{d}^2}^2}(1),1/{\sigma_{\text{d}^2}^2}(2),\dots,1/{\sigma_{\text{d}^2}^2}(N)\right)$
\STATE $\mathbf{\hat{q}}= \left({\mathbf{A}}^T\mathbf{W}{\mathbf{A}}\right)^{-1} \mathbf{A}^T \mathbf{W} \mathbf{b}$
 \STATE $\hat{\mathbf{t}} = \left[\hat{q}_{1},\hat{q}_{2}\right]^{T}$
\end{algorithmic} 
\end{algorithm}
\subsection{Secure Weighted Least Square (SWLS)}
Consider an uncoordinated attack scenario where the malicious nodes change their transmit powers arbitrarily without communicating this to the target node (refer Section~\ref{Sec:non_coordinated_attack}). The received power $\left(p_{0_i}\right)$ from a malicious anchor node can be expressed using (\ref{eq:model_non_coordinated}) as: 
\begin{IEEEeqnarray}{l}
\label{Eq:modified_non_co_attack}
p_i^r=p_{0} - 10n\log_{10}(d_i) + \eta + \kappa 
\end{IEEEeqnarray}
Thus, the RSSI values from the malicious and non-malicious anchor nodes have standard deviation of $\sqrt{\sigma^2+\sigma_{\text{att}}^2}$ and $\sigma$, respectively. SWLS attempts to identify the malicious anchor nodes by observing the RSSI values, and eliminates them from the localization process. The detailed procedure of the SWLS localization is given in Algorithm~\ref{Algo:SWLMS}. 
\begin{algorithm}
\caption{SWLS Localization}
\label{Algo:SWLMS}
\begin{algorithmic}[1]
\renewcommand{\algorithmicrequire}{\textbf{Input:}}
\renewcommand{\algorithmicensure}{\textbf{Output:}}
\REQUIRE $\mathbf{P}^\text{r}$, $\mathbf{a}_i$ ($i=1, \dots, N$), $\sigma$, $\zeta$ 
\ENSURE  $\hat {\mathbf{t}}=[t^x,t^y]^T$\\ 
\textit{Initialize}: $\boldsymbol{\mathbf{A}}$, $\mathbf{b}$, $\mathcal{M} = \emptyset$ 
\STATE $\mathbf{D}=10^{\frac{P_{0}-\mathbf{P}^\text{r}}{10n}}$
\STATE $\overline{\mathbf{p^r}}=\frac{1}{P} \left[\sum_{j=1}^P p^r_{1j},\ \sum_{j=1}^P p^r_{2j},\ \dots \, \ \sum_{j=1}^P p^r_{Nj}\right]^T$
\STATE $\overline{\mathbf{d}}=10^{\frac{P_{0}-\mathbf{\overline{p^r}}}{10n}} = \left[\overline{d_1}, \ \overline{d_2}, \dots, \overline{d_N} \right]^T$
\FOR {$i = 1$ to $N$}
\STATE  $\hat{\sigma}_{\text{est}}=\underset{\sigma_{est} \geq 0}{\arg\min} \ \left|\text{Var}(d_{i1}, d_{i2}, \dots, d_{iP}) - f^{\text{Var}}_{\text{d}} (\overline{d_i},\sigma_{\text{est}}) \right|$, using (\ref{Eq:close_form_sigma_est})
\STATE $\mathcal{M}= \mathcal{M} \cup \{ i \ | \ \hat{\sigma}_{est} < \zeta\sigma \}$ (index of non-malicious nodes)
\ENDFOR
\STATE $\mathbf{\sigma}_{{d}^2}^2(k)$ = $f^{\text{Var}}_{d^2}\left(\mathbf{\overline{d}}_k,\sigma \right) \text{ where }k \in \mathcal{M} $, using (\ref{Eq:var_d^2})  
\STATE $\mathbf{\hat{A}} = \mathbf{A}(j,:)$, $\mathbf{\hat{b}} = \mathbf{b}(j,:)$, $j \in \mathcal{M}$
\STATE $\mathbf{{W}}$ = diag$\left(1/\mathbf{\sigma}_{{d}^2}^2(1),\ 1/\mathbf{\sigma}_{{d}^2}^2(2),\dots,  1/\mathbf{\sigma}_{{d}^2}^2(\text{card}(\mathcal{M}))\right)$
\STATE $\hat{\mathbf{q}}= \left({\mathbf{\hat{A}}}^T\mathbf{ {W}}{\mathbf{\hat{A}}}\right)^{-1} \mathbf{\hat{A}}^T \mathbf{ {W}} \mathbf{\hat{b}}$
\STATE $\hat{\mathbf{t}} = \left[\hat{q}_{1},\hat{q}_{2}\right]^{T}$
\end{algorithmic} 
\end{algorithm}
Let $\mathbf{D}$ be the $P \times N$ matrix of distance estimates from the target node to each of the anchor nodes i.e., $\mathbf{D}=\left [d_{ij}\right]$ where $i=1,\dots,N$ and $j=1,\dots,P$, computed using (\ref{Eq:distance_est_noise_free}). The average distance $\left(\overline{d_i}\right)$ between the target node and an anchor node is calculated using $\overline{p^r}$ obtained from the RSSI values of $P$ packets and not as a column average of the matrix $\mathbf{D}$. This is due to the fact that the estimated distances $\left(d_i\right)$ do not follow a Gaussian distribution (refer~(\ref{Eq:pdf})).
%
%

Variance of the distance estimates is calculated based on the estimated distances from each of the $P$ packets and using (\ref{Eq:var_d}) for different values of the noise standard deviation. An estimate of the noise standard deviation $\left(\sigma_{\text{est}}\right)$ is obtained for each of the anchor nodes by minimizing $\left|\text{Var}(d_{i1}, d_{i2},\dots,d_{iP}) - f^{\text{var}}_{d} (\overline{d_i},\sigma_{\text{est}})\right|$ where $\text{Var}(d_{i1}, d_{i2},\dots,d_{iP})$ represents the variance of the distance estimates calculated from each of the $P$ packets. The malicious nodes are identified by applying a threshold on $\hat{\sigma}_{\text{est}}$ with the threshold level set to $\zeta\sigma$ where $\zeta > 0$. 
Finally, weighted least square localization (as discussed in Algorithm~\ref{Algo:WLMS}) is applied by considering only the non-malicious anchor nodes. The closed-form solution of the estimate $\hat{\sigma}_{\text{est}}$ is given by (used in line 5 of Algorithm~\ref{Algo:SWLMS}):
\begin{IEEEeqnarray}{lCR}
\label{Eq:close_form_sigma_est}
\hat{\sigma}_{\text{est}}=\sqrt{18.86n^2\ln\left(0.5+ 0.5\sqrt{1+ \frac{4\text{Var}(d_{i1},d_{i2},\dots, d_{iP})}{\overline{d_i}^2}}\right)}
\end{IEEEeqnarray}
The derivation of (\ref{Eq:close_form_sigma_est}) is provided in the Appendix~\ref{appendix}.

This technique relies on the variance in the RSSI values, and in the case of a coordinated attack the variance remains the same for malicious and non-malicious nodes (refer~(\ref{eq:model_coordinated})). Thus, this technique is not robust to coordinated attacks on the localization process. 
\subsection{Localization using $\ell_1$-norm Optimization }
The localization problem can be posed as a 3-dimensional plane fitting problem $z=f(x,y)$ where $z$ represents $\mathbf{b}$, and $x$, $y$ represent the first two columns of $\mathbf{A}$ (refer~(\ref{Eq:A_B})). The objective is to find a plane $z=\alpha x +\beta y + \gamma$ where $\alpha = t^x$, $\beta = t^y$, and $\gamma = (t^x)^2+(t^y)^2$, that fits the measurements (or data points) $\langle -2a_i^x,-2a_i^y,  \overline{d_i}^2 - (a_i^x)^2 - (a_i^y)^2\rangle$, $i=1,\dots,N$. The values of $\alpha$, $\beta$, and $\gamma$ can be obtained by minimizing the $\ell_2$-norm based distance between the measurements and the plane: 
\begin{IEEEeqnarray}{lCr}
\label{Eq:LN_norm2}
\begin{aligned}
& \underset{\mathbf{u}}{\text{min}} & & \left\lVert\mathbf{r}\right\rVert^2_2 \\
& \text{subject to} & & \mathbf{r}=\mathbf{Au} - \mathbf{b} \\
\end{aligned}
\end{IEEEeqnarray}
where $\mathbf{u}=[\alpha, \beta, \gamma ]^T$. The closed form solution of (\ref{Eq:LN_norm2}) is $\mathbf{u}=\mathbf{A}^\dagger \mathbf{b}$~\cite{boyd2004convex} where $\dagger$ represents the pseudo inverse. This is similar to the localization process discussed in Algorithm~\ref{Algo:WLMS}.   

In an uncoordinated attack scenario, the measurements representing points in three dimensions are divided into two categories on the basis of the variance in the received power. Measurements from non-malicious nodes display less variance than those from malicious nodes. From a curve fitting perspective, the data points corresponding to the malicious nodes can be treated as outliers.
%
\begin{figure}
\centering
\includegraphics[width=12cm,keepaspectratio]{}
%
\caption{Planes containing measurements corresponding to the malicious and non-malicious anchor nodes. Anchor nodes are randomly distributed with 31\% of the anchor nodes being malicious, measurement noise ($\sigma$) is 0~dB, and $ \left\lVert\mathbf{t}-\mathbf{t}_{\text{att}}\right\rVert_2=35.35$~m.}
\label{fig:LN_1_non_co_plane}
\end{figure}
However, in a coordinated attack, measurements from all the nodes display similar variance since the malicious nodes do not vary their transmit power randomly (refer~(\ref{eq:model_coordinated})). The transmit power of the malicious anchor nodes is assumed to be different from those of the non-malicious anchor nodes, and depends on the value of $\chi_i$. Fig.~\ref{fig:LN_1_non_co_plane} shows a coordinated attack visualized as a plane fitting problem. The data points corresponding to the measurements from malicious and non-malicious anchor nodes are shown to lie on two different planes. Determining these planes will enable us to estimate the location of the target node ($\mathbf{t}$) and the location where the malicious anchor nodes intend to make the target node appear to be located ($\mathbf{t}_{\text{att}}$). 
As $\left\lVert\mathbf{t}-\mathbf{t}_{\text{att}}\right\rVert_2$ increases the measurements corresponding to the malicious anchor nodes move farther from the plane that fits the measurements from non-malicious anchor nodes. 

In a coordinated attack scenario, measurements from the malicious anchor nodes can thus be treated as outliers. Hence, under both uncoordinated and coordinated attacks, the optimization problem (\ref{Eq:LN_norm2}) can be rewritten as a robust plane fitting problem by replacing the $\ell_2$-norm in the objective with an $\ell_1$-norm~\cite{boyd2004convex,baraniuk2011introduction}:
\begin{IEEEeqnarray}{l}
\label{Eq:LN_norm1}
\begin{aligned}
& \underset{\mathbf{u}}{\text{min}}
& & \left\lVert\mathbf{r}\right\rVert_1\\
& \text{subject to} & & \mathbf{r}=\mathbf{Au} - \mathbf{b}\\
\end{aligned}
\end{IEEEeqnarray}
\noindent (\ref{Eq:LN_norm1}) can be solved efficiently using ADMM~\cite{boyd2011distributed} by reformulating it as:
\begin{IEEEeqnarray}{l}
\label{Eq:LN_norm1_ADMM}
\begin{aligned}
& \underset{\mathbf{u,z}}{\text{min}}
& & \left\lVert\mathbf{z}\right\rVert_1 \\
& \text{subject to} & & \mathbf{Au} - \mathbf{z}=  \mathbf{b} \\
\end{aligned}
\end{IEEEeqnarray}
\noindent Using the standard ADMM steps, (\ref{Eq:LN_norm1_ADMM}) can be solved iteratively as: 
\begin{IEEEeqnarray}{rCl}
\IEEEyessubnumber*
\label{Eq:LN_norm1_ADMM_itr}
\mathbf{u}^{k+1} & = & \mathbf{G}\mathbf{A}^T\left(\mathbf{b}+\mathbf{z}^k-\frac{\mathbf{y}^k}{\rho}\right)  \\ 
\label{Eq:LN_norm1_ADMM_itr_b}
\mathbf{z}^{k+1} & = & \mathit{S}_{\frac{1}{\rho}}\left(\mathbf{A}\mathbf{u}^{k+1}-\mathbf{b}^k+\frac{\mathbf{y}^k}{\rho}\right) \\
\label{Eq:LN_norm1_ADMM_itr_c}
\mathbf{y}^{k+1} & = & \mathbf{y}^{k} + \rho\left(\mathbf{A}\mathbf{u}^{k+1} - \mathbf{z}^{k+1} - \mathbf{b} \right)
\end{IEEEeqnarray}
where $\mathbf{G} = (\mathbf{A}^T\mathbf{A})^{-1}$, $\mathbf{y}$ is the dual variable (Lagrange multiplier), and $\rho \left(>0\right)$ is the penalty parameter for the violation of the linear constraint~\cite{note_ADMM_Han}. $\mathit{S}_{\frac{1}{\rho}}(x) = \left\{\max\left(\left|x\right| - \frac{1}{\rho},0\right).\text{sign}(x)\right\}$ is the proximal operator of $\ell_1$~\cite{parikh2014proximal}. The convergence criteria for solving (\ref{Eq:LN_norm1_ADMM}) using ADMM is $\left|\lVert\mathbf{z}^{k+1}\rVert_1 - \lVert \mathbf{z}^k\rVert_1\right| \leq Conv_{\text{ADMM}}$, where $Conv_{\text{ADMM}} > 0$. 

We propose two localization techniques LN-1 and LN-1E. LN-1 attempts to solve the optimization problem (\ref{Eq:LN_norm1_ADMM}) in order to localize the target node in both uncoordinated and coordinated attack scenarios. LN-1E is an improvement over LN-1 to handle only coordinated attacks. LN-1E identifies the malicious anchor nodes by applying LN-1 and then recomputes the plane by solving (\ref{Eq:LN_norm1_ADMM}) after eliminating the measurements corresponding to the malicious anchor nodes. Using LN-1, we determine the ``best fitting" plane for the measurements from all the anchor nodes $\hat{f}(x,y)=\widehat{\alpha}x + \widehat{\beta}y+ \widehat{\gamma}$. It is found that the data points corresponding to the non-malicious anchor nodes are closer to this plane than those corresponding to the malicious anchor nodes. K-means clustering~\cite{Bishop_PRM} is used to partition the anchor nodes into two groups based on their distance from this plane. The steps for implementing LN-1 and LN-1E are listed in Algorithm~\ref{Algo:LN-1E}. The node-to-plane distances are grouped into two clusters using the K-means clustering algorithm, and the centroid of the cluster containing the malicious nodes is always farther from the plane than the centroid of the other cluster. After identifying the two clusters, we recompute the ``best fitting" plane for the measurements only from the non-malicious anchor nodes as $\hat{f}_{\text{new}}(x,y)=\widehat{\alpha}_{\text{new}}x + \widehat{\beta}_{\text{new}}y+ \widehat{\gamma}_{\text{new}}$.
\begin{algorithm}
\caption{LN-1 and LN-1E Localization}
\label{Algo:LN-1E}
\begin{algorithmic}[1]
\renewcommand{\algorithmicrequire}{\textbf{Input:}}
\renewcommand{\algorithmicensure}{\textbf{Output:}}
\REQUIRE $\mathbf{A}$, $\mathbf{b}$, $N$
\ENSURE  Estimated target locations $\mathbf{\hat{t}}_{LN1}$,  $\mathbf{\hat{t}}_{LN1E}$
\\ \underline{Computing the ``best fitting" plane using measurements from all the anchor nodes:}
\STATE $\mathbf{\widehat{u}} = \left[ \widehat{\alpha} \ \widehat{\beta} \  \widehat{\gamma} \right]^T$ by solving (\ref{Eq:LN_norm1_ADMM})
\STATE $\hat{f}(x,y)=\hat{\alpha}x+\hat{\beta}y+\hat{\gamma}$
\STATE $\mathbf{\hat{t}}_{LN1} = \left[ \widehat{\alpha} \ \widehat{\beta} \right]^{T}$
\STATE Compute distances $d_i^p$ of the data points $\langle-2a_i^x, -2a_i^y, \overline{d_i}^2- \left(a_i^x\right)^2-\left(a_i^y\right)^2 \rangle$ from the plane $\hat{f}(x,y),i=1,...,N$.
\STATE Using K-means algorithm, group the distances $d_i^p$ into two clusters. The cluster closer to the plane $\hat{f}(x,y)$ is assumed to represent the non-malicious anchor nodes.
\STATE $\hat{f}_{\text{new}}(x,y)=\widehat{\alpha}_{\text{new}}x + \widehat{\beta}_{\text{new}}y+ \widehat{\gamma}_{\text{new}}$, using only measurements from non-malicious anchor nodes.
\STATE $\mathbf{\widehat{u}}_{\text{new}} = \left[ \widehat{\alpha}_{\text{new}} \ \widehat{\beta}_{\text{new}} \  \widehat{\gamma}_{\text{new}} \right]^T$ by solving (\ref{Eq:LN_norm1_ADMM})
\STATE $\mathbf{\hat {t}}_{LN1E} = \left[ \widehat{\alpha}_{\text{new}} \ \widehat{\beta}_{\text{new}} \right]^{T}$
\end{algorithmic} 
\end{algorithm}
\begin{remark}
The identification of malicious anchor nodes will also enable determining the location $\mathbf{t}_{\text{att}}$ where the malicious nodes intend to make the target appear to be located. This can be useful from a security perspective in many applications and will be explored as part of future work.
\end{remark}
\section{Cramer-Rao lower bound (CRLB)}
\label{Sec:crlb}
CRLB provides a lower bound on the variance of an unbiased estimator and can be used as a benchmark for other estimators~\cite{kay1993fundamentals,crlb_Ouyang}. The mean square error (MSE) of an unbiased estimator $\left(\mathbf{\hat{t}}\right)$ of the target node position $\left(\mathbf{t}\right)$ can be expressed as shown below and bounded using the CRLB as:
\begin{IEEEeqnarray}{rCl}
\text{MSE}\left(\mathbf{\hat{t}}\right) &=& \mathbb{E}\left[\left(\hat{t^x}-{t^x}\right)^2\right] + \mathbb{E}\left[\left(\hat{t^y}-{t^y}\right)^2\right]=\text{Var}\left(\hat{t^x}\right) + \text{Var}\left(\hat{t^y}\right)\geq \left[\mathbf{F}^{-1}\right]_{11}+\left[\mathbf{F}^{-1}\right]_{22}=\text{tr}\left(\mathbf{F}^{-1}\right) 
\nonumber
\label{eq:mse_crlb}
\end{IEEEeqnarray}
where $\mathbf{F}$ is the Fisher information matrix (FIM). The RMSE of an unbiased estimator satisfies     
$\text{RMSE}\left(\mathbf{\hat{t}}\right) \geq  \sqrt{\text{tr}\left(\mathbf{F}^{-1}\right)}$.
Thus, the CRLB provides a lower bound on the RMSE of unbiased estimators for estimating the target node position.   
 
For computing the CRLB, we assume that the identity of malicious and non-malicious anchor nodes is known. In addition, $\sigma$ and $\sigma_{\text{att}}$ are assumed to be known. 
We define $\mathcal{A}_{\text{nm}}$ and $\mathcal{A}_{\text{m}}$ as sets containing the indices of non-malicious and malicious anchor nodes, respectively.
%
\subsection{CRLB for Uncoordinated Attack}
The probability density function (PDF) of the received signal power at the target node in an uncoordinated attack can be expressed as:   
\begin{IEEEeqnarray}{c}
\label{eq:crlb_probability_non_coordinated}
p(\mathbf{P}^\text{r};\mathbf{t})=\prod_{j=1}^{P}\left[\prod_{i \in \mathcal{A}_{\text{nm}}} \frac{1}{\sqrt{2 \pi \sigma^2}}\exp\frac{ -\left( p^r_{ij}-p_0+10n\log_{10}(d_i) \right)^2}{2 \sigma^2}\right.\nonumber\\ 
\times \left.\prod_{k\in \mathcal{A}_{\text{m}}}\frac{1}{\sqrt{2\pi\sigma^2_{\text{eff}}}}\exp\frac{-\left(p^r_{kj}-p_0+10n\log_{10}(d_k)\right)^2}{2\sigma^2_{\text{eff}}}\right]
\end{IEEEeqnarray}
where $\sigma_{\text{eff}}^{2}=\sigma^2 + \sigma_{\text{att}}^2$. 
The PDF $p(\mathbf{P}^\text{r};\mathbf{t})$ satisfies the regularity conditions $\mathbb{E}\left[\frac{\partial\ln(p(\mathbf{P}^\text{r};\mathbf{t}))}{{\partial t^x}} \right]=0$ and $\mathbb{E}\left[\frac{\partial\ln(p(\mathbf{P}^\text{r};\mathbf{t}))}{{\partial t^y}}\right]=0$, and therefore, the CRLB is given by  ${t^{\text{uc}}_{\text{CRLB}}}=\sqrt{\text{tr}(\mathbf{F}_{\text{uc}}^{-1})}$ where the FIM is given by $\mathbf{F}_{\text{uc}}=\left[f^{\text{uc}}_{xx} \  f^{\text{uc}}_{xy} ; f^{\text{uc}}_{yx} \ f^{\text{uc}}_{yy}\right]$ where:
\begin{IEEEeqnarray}{rCll}
\label{eq:CRLB_non_coordinated}
f^{\text{uc}}_{xx} & = & \frac{100 P n^2}{\ln^{2}(10)}\Bigg[ & \frac{1}{\sigma^2}\sum_{i\in \mathcal{A}_{\text{nm}}}\frac{ (a_i^x-t^x)^2}{\lVert\mathbf{a}_i-\mathbf{t}\rVert_2^4} + \frac{1}{\sigma^2_{\text{eff}}}\sum_{k\in \mathcal{A}_{\text{m}}}\frac{ (a_k^x-t^x)^2}{\lVert\mathbf{a}_k-\mathbf{t}\rVert_2^4}\Bigg]\nonumber\\     
f^{\text{uc}}_{yy} & = & \frac{100 P n^2}{\ln^{2}(10)}\Bigg[ & \frac{1}{\sigma^2}\sum_{i\in \mathcal{A}_{\text{nm}}}\frac{ (a_i^y-t^y)^2}{\lVert\mathbf{a}_i-\mathbf{t}\rVert_2^4} + \frac{1}{\sigma^2_{\text{eff}}}\sum_{k\in \mathcal{A}_{\text{m}}}\frac{ (a_k^y-t^y)^2}{\lVert\mathbf{a}_k-\mathbf{t}\rVert_2^4}\Bigg]\\    
f^{\text{uc}}_{xy} & = & \frac{100 P n^2}{\ln^{2}(10)}\Bigg[ & \frac{1}{\sigma^2}\sum_{i\in \mathcal{A}_{\text{nm}}}\frac{(a_i^x-t^x)(a_i^y-t^y)}{\lVert\mathbf{a}_i-\mathbf{t}\rVert_2^4} +\> \frac{1}{\sigma^2_{\text{eff}}} \sum_{k \in \mathcal{A}_{\text{m}}}\frac{ (a_k^x-t^x)(a_k^y-t^y)}{\lVert\mathbf{a}_k-\mathbf{t}\rVert_2^4}\Bigg]\nonumber  
\end{IEEEeqnarray}
\subsection{CRLB for Coordinated Attack}
The PDF of the measurement matrix $\mathbf{P}^\text{r}$ in a coordinated attack can be expressed as:   
\begin{IEEEeqnarray}{c}
\mathit{p}(\mathbf{P}^\text{r};\mathbf{t}) = \frac{1}{\sqrt{2 \pi \sigma^2}} \prod_{j=1}^{P} \Bigg[ \prod_{i \in \mathcal{A}_{\text{nm}}}  \exp \frac{-\left(p^r_{ij} - p_0 + 10n\log_{10}(d_i) \right)^2  }{2 \sigma^2}   
 \prod_{k \in \mathcal{A}_{\text{m}}} \exp  \frac{-\left(p^r_{kj} - p_0 + 10n\log_{10}(d_k^{\prime}) \right)}{2 \sigma^2}^2
\Bigg]\nonumber
\label{eq:crlb_probability_coordinated}
\end{IEEEeqnarray}
where $d_k^{\prime}=\left\lVert\mathbf{a}_k - \mathbf{t_{\text{att}}}\right\rVert_2$. The PDF $\mathit{p}(\mathbf{P}^\text{r};\mathbf{t})$ satisfies the regularity conditions $\mathbb{E}\left[\frac{\partial\ln(\mathit{p}(\mathbf{P}^\text{r};\mathbf{t}))}{{\partial t^x}}\right] = 0$ and $\mathbb{E}\left[ \frac{\partial\ln(\mathit{p}(\mathbf{P}^\text{r};\mathbf{t}))}{{\partial t^y}}\right] = 0$, and therefore the CRLB in the coordinated attack is  $t_{\text{CRLB}}^{c}=\sqrt{\text{tr}(\mathbf{F}_\text{c}^{-1})}$. The FIM for coordinated attack is given by $\mathbf{F}_{\text{c}}=[f_{xx}^{\text{c}} \  f_{xy}^{\text{c}} ; f_{yx}^{\text{c}} \ f_{yy}^{\text{c}}]$ where:
\begin{IEEEeqnarray}{rCll}
\label{eq:CRLB_coordinated}
f_{xx}^{\text{c}} & = &  \frac{100 P n^2}{\sigma^2\ln^{2}(10)} \Bigg[&   \sum_{i \in \mathcal{A}_{\text{nm}}} \frac{ (a_i^x-t^x)^2}{\left\lVert\mathbf{a}_i-\mathbf{t}\right\rVert_2^4}  
          + \sum_{k \in \mathcal{A}_{\text{m}}} \frac{ (a_k^x-t^x_{\text{att}})^2}{\left\lVert\mathbf{a}_k-\mathbf{t_{\text{att}}}\right\rVert_2^4}\Bigg] \nonumber\\
f_{yy}^{\text{c}} & = &  \frac{100 P n^2}{\sigma^2\ln^{2}(10)} \Bigg[& \sum_{i \in \mathcal{A}_{\text{nm}}} \frac{ (a_i^y-t^y)^2}{\left\lVert\mathbf{a}_i-\mathbf{t}\right\rVert_2^4}  
        +  \sum_{k \in \mathcal{A}_{\text{m}}} \frac{ (a_k^y-t^y_{\text{att}})^2}{\left\lVert\mathbf{a}_k-\mathbf{t_{\text{att}}}\right\rVert_2^4}\Bigg]\\
f_{xy}^{\text{c}} & = &  \frac{100 P n^2}{\sigma^2\ln^{2}(10)} \Bigg[&  \sum_{i \in \mathcal{A}_{\text{nm}}} \frac{ (a_i^x-t^x)(a_i^y-t^y)}{\left\lVert\mathbf{a}_i-\mathbf{t}\right\rVert_2^4} + \sum_{k \in \mathcal{A}_{\text{m}}} \frac{ (a_k^x-t^x_{\text{att}})(a_k^y-t^y_{\text{att}})}{\left\lVert\mathbf{a}_k-\mathbf{t_{\text{att}}}\right\rVert_2^4}\Bigg]\nonumber
\end{IEEEeqnarray}
%
%
\begin{table*}[]
\captionsetup{justification=centering, labelsep=newline}
\centering
\caption{Different localization techniques being compared.}
\label{Tb:loc_tech_desc}
\begin{tabular}{|l|l|c|c|}
\hline
\multicolumn{1}{|c|}{Algorithm} & \multicolumn{1}{c|}{Description} & Uncoordinated & Coordinated\\ 
\hline
LS & Least Square based localization~\cite{linerarization_Yu} & \cmark & \xmark\\
\hline
Grad-Desc & Iterative gradient descent with selective pruning~\cite{secure_loc_garg} & \cmark & \cmark\\ 
\hline
LMdS & Least Median Square~\cite{median_sq_error_li} & \cmark & \cmark\\ 
\hline
WLS & Weighted LS & \cmark & \cmark\\ 
\hline
SWLS & Secure Weighted LS & \cmark & \xmark\\ 
\hline
LN-1 & $\ell_1$-Norm based & \cmark & \cmark\\ 
\hline
LN-1E & $\ell_1$-Norm based with malicious anchor node elimination & \xmark & \cmark\\ 
\hline
ML & ML estimator~(\ref{eq:min_ML}), initialized with true target node location & \cmark & \xmark\\ 
\hline
\end{tabular}
\end{table*}
The estimators discussed in this paper (refer Table~\ref{Tb:loc_tech_desc}) are biased (determined via simulation) and thus the CRLB cannot be used to lower bound their performance. However, the CRLB is used as a benchmark as it represents the minimum RMSE that can be achieved by an unbiased estimator.  
\section{Performance evaluation}\label{Sec:perf_eval}
We have carried out extensive performance evaluation of the proposed secure localization techniques and compared it with four existing techniques from the literature, namely least squares (LS)~\cite{kay1993fundamentals}, LMdS~\cite{median_sq_error_li}, gradient descent (Grad-Desc)~\cite{secure_loc_garg,grad_ravi_garg}, and maximum likelihood (ML)~\cite{kay1993fundamentals} methods. 
We consider a network spread over a $100\text{ m}\times 100$ m area with $29$ anchor nodes and one target node. Anchor nodes are set to transmit at $-10$ dBm ($p_0$), and the path loss exponent ($n$) is assumed to be $4$ representing a suburban environment~\cite{path_loss}. 

The LS method solves for the target node location as $\hat{\mathbf{t}}=\left(\mathbf{A}^{T}\mathbf{A}\right)^{-1}\mathbf{A}^{T}\mathbf{b}$. ML estimate for the target node location is obtained by solving the non-convex optimization problem:
\begin{IEEEeqnarray}{c}
\mathbf{\hat{t}} = \underset{\mathbf{t}}{\text{argmin}} \sum_{i=1}^{N}\sum_{j=1}^{P}\left(p^r_{ij}-p_0 + 10n\log_{10}\left(d_i\right)\right)^2
\label{eq:min_ML}
\end{IEEEeqnarray}
We solve~\eqref{eq:min_ML} using the \textit{fminunc} function in Matlab which is based on the quasi-Newton method~\cite{broyden1970quasi}. 
ML is initialized with the true location of the target node. 
For LMdS, we consider $20$ (intersecting) subsets with each subset consisting of $4$ anchor nodes. For Grad-Desc, the maximum number of iterations is $200$ and a constant step size of $0.4$ is chosen. Variable step size is not considered as its performance is reported to be similar to that with a constant step size~\cite{secure_loc_garg}. The threshold for the anchor selection or pruning step in Grad-Desc is empirically set to a value that gives the best performance. In SWLS, $\zeta$ is empirically set to 1.5. LN-1 and LN-1E use ADMM to solve (\ref{Eq:LN_norm1_ADMM}), and the value of $Conv_{\text{ADMM}}$ and $\rho$ are determined empirically and set to $10^{-6}$ and 0.2, respectively. The maximum number of iterations for allowing ADMM to converge is set to 5000. All results reported in this paper are based on $5000$ Monte Carlo simulations.
\subsection{Uncoordinated Attack}
Fig.~\ref{fig:topology_nc_random_anchor} shows a randomly deployed network with $29$ anchor nodes and one target node. The network is assumed to contain $8$ malicious anchor nodes (i.e., roughly $28\%$ of the anchor nodes are malicious) which attempt to disrupt the target node's localization process via uncoordinated attack. Fig.~\ref{fig:nc_rms_2db} and Fig.~\ref{fig:nc_rms_2db_target_edge} show the performance of the various secure localization techniques in terms of the RMS localization error. The RMS localization error is shown as a function of $\sigma_{\text{att}}$. 
The target node estimates its position by executing the localization process after receiving $10$ packets from each of the anchor nodes. In the Monte Carlo simulations, the topology and the percentage of malicious anchor nodes are kept fixed, while the malicious anchor nodes are chosen randomly from the $29$ anchor nodes for each simulation run. Simulations are also carried out to study the performance of the algorithms as the target node moves closer to the edge of the network (refer Fig.~\ref{fig:nc_rms_2db_target_edge}). 
%
\begin{figure*}[!b]
\centering
\begin{subfigure}[]{\textwidth}
\centering
\includegraphics[width=10cm,keepaspectratio]{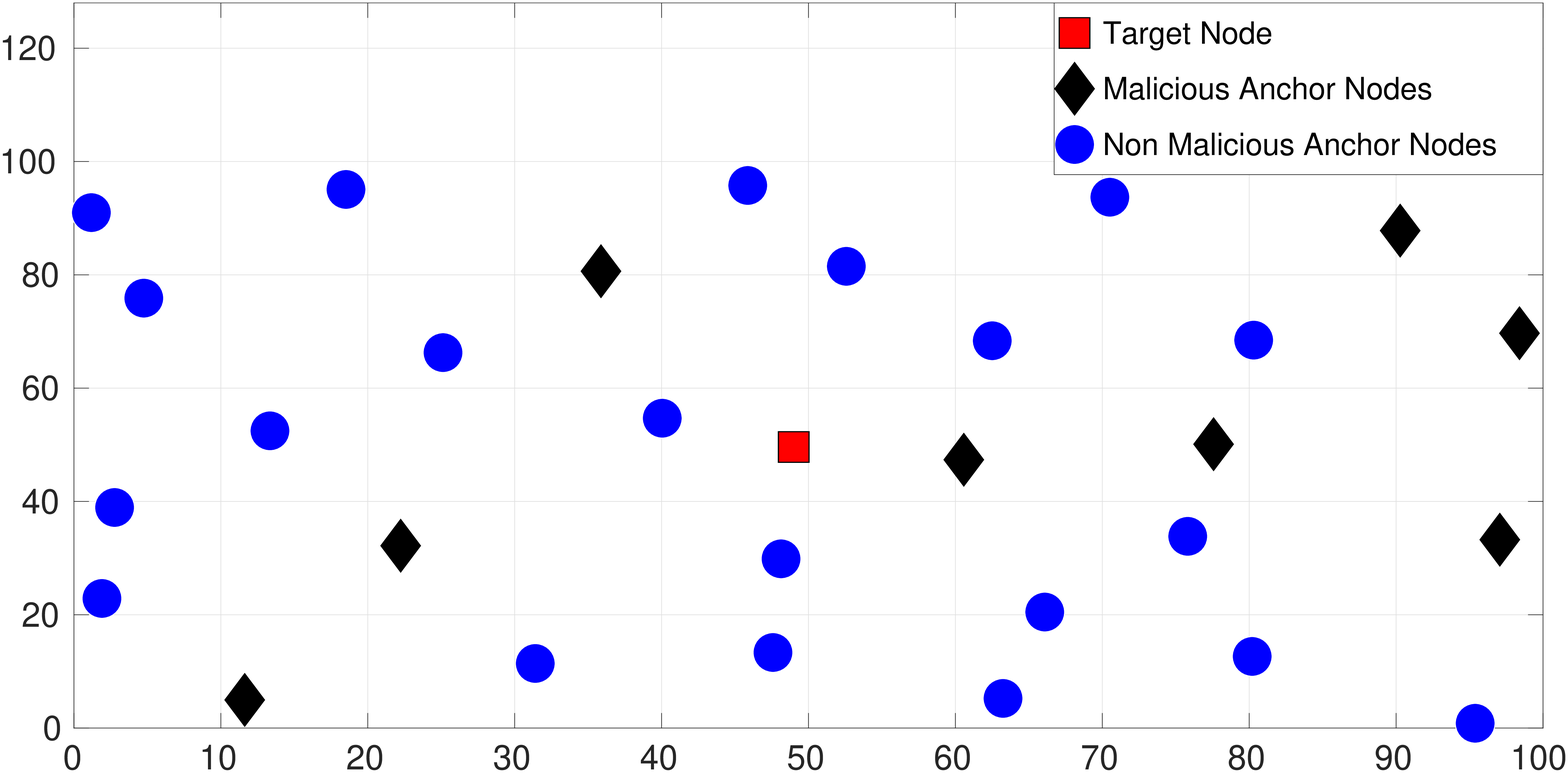}
\caption{Network topology for an uncoordinated attack with randomly placed anchors nodes (Percentage of malicious anchor nodes is $28\%$).}
\label{fig:topology_nc_random_anchor}
\end{subfigure}
\begin{subfigure}[]{\textwidth}
\centering
\includegraphics[width=12cm,keepaspectratio]{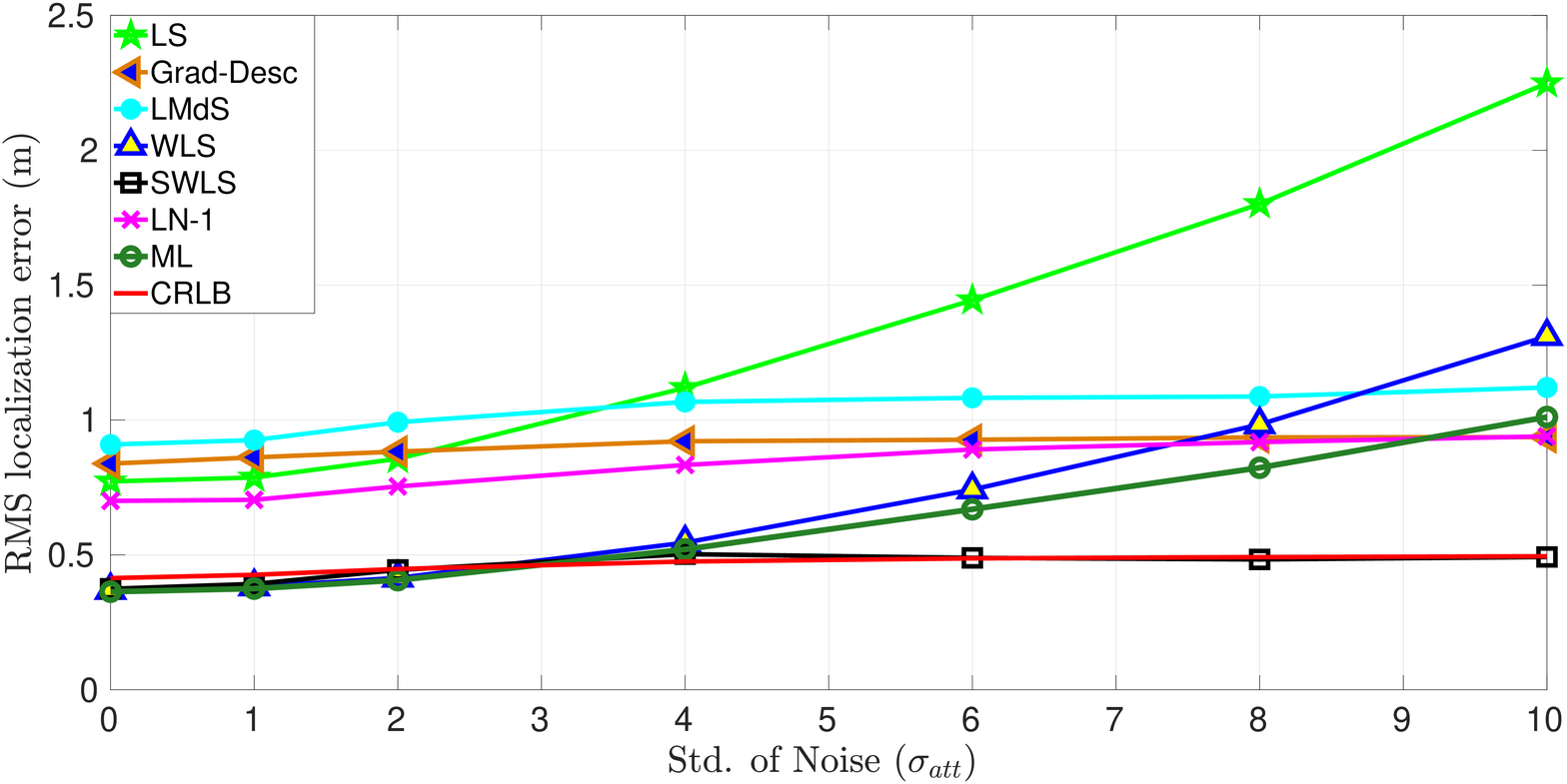}
\caption{RMS localization error of the secure localization techniques with $\sigma$ = 2 dB as a function of $\sigma_{\text{att}}$.      }
\label{fig:nc_rms_2db}
\end{subfigure}
\end{figure*}
\begin{figure*}[]
\ContinuedFloat
%
%
\begin{subfigure}[]{\textwidth}
\centering
\includegraphics[width=12cm,keepaspectratio]{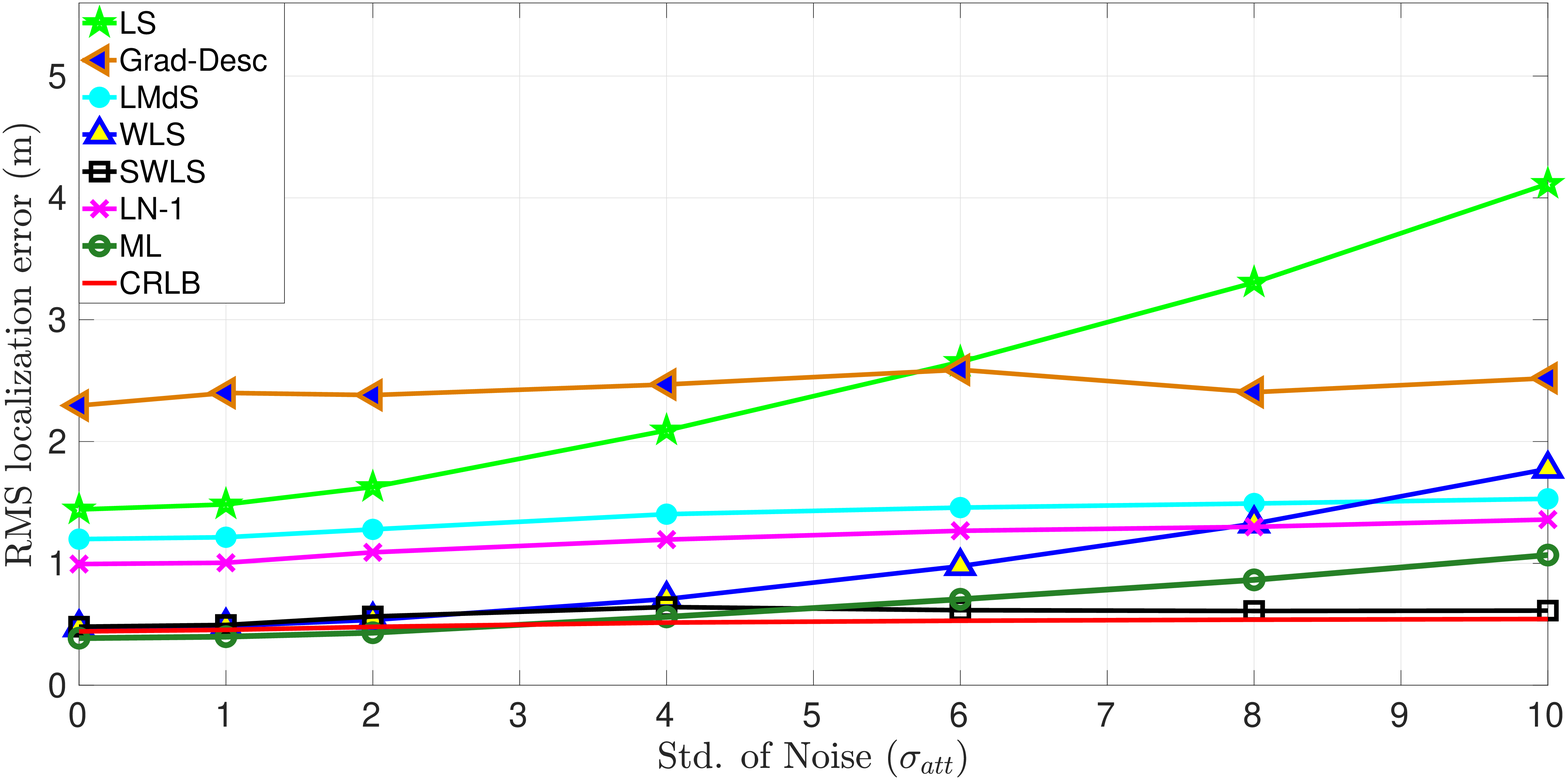}
\caption{RMS localization error as a function of $\sigma_{\text{att}}$ with $\sigma=2$ dB and the target node at location $(10,60)$.}
\label{fig:nc_rms_2db_target_edge}
\end{subfigure}
%
%
\caption{Performance of secure localization techniques under uncoordinated attack with random uniformly distributed anchor nodes and $P=10$ packets.}
\label{fig:nc_uniform_random_anchor}
\end{figure*}

SWLS and WLS techniques assign large weights to the anchor nodes located closer to the target node and thus reduce the effect of the larger distance estimation errors from the farther anchor nodes in the localization process. The performance of SWLS and WLS are similar as long as $\sigma_{\text{att}}$ is not significantly higher than the measurement noise ($\sigma$). However, WLS performance deteriorates as $\sigma_{\text{att}}$ becomes higher than $\sigma$ and WLS starts assigning larger weight to the malicious anchor nodes which are close to the target node. On the other hand, SWLS outperforms the other techniques as it attempts to eliminate the malicious anchor nodes from the localization process. SWLS is the only estimator whose RMS localization error is closest to the CRLB in most of the scenarios. CRLB is barely affected with the increase in the value of $\sigma_{\text{att}}$ as only $28\%$ of the anchor nodes are malicious.
From (\ref{eq:CRLB_non_coordinated}), it can be seen that the CRLB will be significantly effected when the percentage of malicious anchor nodes is higher and $\sigma_{\text{att}}$ is greater than $\sigma$. When the target node is moved closer to the edge of the network (refer Fig.~\ref{fig:nc_rms_2db_target_edge}), the performance of Grad-Desc degrades significantly as the gradient descent algorithm appears to get stuck in a local minima. The relative performance of the other techniques is similar to the case when the target node was located at the center of the network.      

%
%
\begin{figure}
\centering
%
%
\begin{subfigure}{\textwidth}
\centering
\includegraphics[width=12cm,keepaspectratio]{}
\caption{RMS localization error when the malicious anchor nodes are close to the target node (within $32$~m).}
\label{fig:nc_rms_mal_anchor_32}
\end{subfigure}
%
%
%
%
%
\begin{subfigure}{\textwidth}
\centering
\includegraphics[width=12cm,keepaspectratio]{}
\caption{RMS localization error when the malicious anchor nodes are located at the edge of the network.}
\label{fig:nc_rms_mal_anchor_45}
\end{subfigure}
\caption{Localization performance under uncoordinated attack when the malicious anchor nodes are moved from the center to the edge of the network. ($\sigma=2$ dB, $P=10$ packets, and percentage of malicious anchor is $28\%$.)}
\label{fig:nc_loc_mal_anc_varied}
\end{figure}
Next we study the performance of the localization techniques as the malicious anchor nodes are moved from the center towards the edge of the network. The network topology and the percentage of anchor nodes are kept the same as in Fig.~\ref{fig:nc_uniform_random_anchor}. The target node is at the center of the network and two scenarios are considered with malicious anchors located: (i) within $32$~m of the target node, and (ii) more than 45~m from the target node.
The localization performance for these scenarios are shown in Fig.~\ref{fig:nc_rms_mal_anchor_32} 
and Fig.~\ref{fig:nc_rms_mal_anchor_45}.    
It is observed that LS is significantly affected by the position of the malicious anchor nodes and its performance deteriorates as the malicious anchor nodes move towards the edge of the network. It is also observed from Fig.~\ref{fig:nc_rms_mal_anchor_32} 
that LS outperforms LMdS and Grad-Desc when malicious anchor nodes are located close to the target.
LS gives equal weight to measurements from all the anchor nodes, and from Lemma~\ref{lemma:two_diff_dist_per} we know that distance estimates for the anchor nodes located farther from the target node tend to have large errors due to $\sigma_{\text{att}}$.

In Fig.~\ref{fig:nc_rms_mal_anchor_32}, we see that the localization performance of LMdS and Grad-Desc deteriorates gradually with increase in the value of $\sigma_{\text{att}}$. However, as the malicious anchor nodes move away from the target node and towards the edge of the network, these two techniques display robustness to the attack. LMdS and Grad-Desc pick four and $\frac{N}{2}$ anchor nodes, respectively. 
In Fig.~\ref{fig:nc_rms_mal_anchor_32}, as the malicious anchor nodes are close to the target node, the residuals of the subset in LMdS or their gradients in Grad-Desc are lower than when the malicious anchor nodes are farther from the target nodes resulting in some malicious anchor nodes getting picked. In contrast to LS, WLS is more robust as the malicious anchor nodes move towards the edge of the network, because it assigns large weight to the anchor nodes that are located close to the target node.
%

SWLS outperforms the other techniques in the scenarios shown in Fig.~\ref{fig:nc_loc_mal_anc_varied}. In Fig.~\ref{fig:nc_rms_mal_anchor_32}, SWLS results in an abrupt increase in the RMS localization error as $\sigma_{\text{att}}$ becomes greater than $\sigma$. For $\sigma_{\text{att}}<\sigma$, $\sigma_{\text{eff}}$ is almost equal to or slightly greater than $\sigma$ and the localization accuracy is not significantly affected. When $\sigma_{\text{att}}>\sigma$, SWLS can identify the malicious anchor nodes and eliminate them. However when $\sigma_{\text{att}}\approx\sigma$, SWLS fails to identify the malicious anchor nodes resulting in a larger localization error as seen in Fig.~\ref{fig:nc_rms_mal_anchor_32}. Similar behaviour is expected in Fig.~\ref{fig:nc_rms_mal_anchor_45} except now the malicious anchor nodes are farther from the target and are assigned lower weights. Thus, the localization performance does not deteriorate as in Fig.~\ref{fig:nc_rms_mal_anchor_32}. ML outperforms WLS for higher values of $\sigma_{\text{att}}$, otherwise exhibits similar trend as WLS. SWLS performance in Fig.~\ref{fig:nc_loc_mal_anc_varied} is close to the CRLB and SWLS outperforms the other estimators. 

The localization performance of LN-1 is not affected by the position of the malicious anchor nodes. LN-1 performance is similar in all cases considered in Fig.~\ref{fig:nc_loc_mal_anc_varied}  as well as in the case where the malicious anchor nodes are randomly placed (refer Fig.~\ref{fig:nc_rms_2db}). LN-1 does not weight the anchor nodes differently based on the position nor eliminates any anchor nodes from the localization process. It reduces the weight of those measurements that exhibit higher variance (outliers) and thus its localization performance is affected by $\sigma_{\text{att}}$ and $\sigma$.

Fig.~\ref{fig:nc_rms_packet_per_mal_anchor} shows the result of simulations carried out to understand the effect of the number of packets ($P$) and the percentage of malicious anchor nodes on the localization performance. The network topology is the same as in Fig.~\ref{fig:topology_nc_random_anchor}. 
Fig.~\ref{fig:nc_rms_packet} shows that with $P=2$, LN-1 and Grad-Desc outperform WLS and SWLS, however as $P$ increases WLS and SWLS outperform the other techniques. As $P$ increases, the estimated distances become more accurate resulting in a better estimate of the variance of $d^2$ (refer~\eqref{Eq:var_d^2}), and thus the performance of WLS and SWLS improves at a faster rate than the other techniques (except LS). For $P>10$, the RMS localization error of WLS and SWLS is close to ML and CRLB, respectively.
%

In Fig.~\ref{fig:nc_rms_per_mal_anchor}, the localization performance of all the techniques is found to deteriorate as the percentage of malicious anchor nodes increases. From (\ref{eq:CRLB_non_coordinated}), as $\text{card}\left(\mathcal{A}_{\text{m}}\right)$ increases (and $\text{card}\left(\mathcal{A}_{\text{nm}}\right)$ decreases), the individual elements of the Fisher matrix $\mathbf{F}_{\text{uc}}$ decrease since $\sigma_{\text{eff}}>\sigma$. Thus the overall CRLB also increases with the number of malicious anchor nodes in the system. The performance of Grad-Desc, LN-1, LMdS, and SWLS begins to deteriorate at a faster rate when the percentage of malicious anchor nodes exceeds $50\%$. In this scenario, SWLS is able to eliminate the malicious anchor nodes and computes the target node position using the remaining non-malicious anchor nodes. The performance deteriorates due to the relatively fewer anchor nodes. LMdS estimates the target node position using a subset of four anchor nodes, and as the percentage of malicious anchor nodes increases, the chances of LMdS picking one or more malicious anchor nodes in its final subset also increase.
\begin{figure}[]
\centering
\begin{subfigure}[b]{\textwidth}
\centering
\includegraphics[width=12cm,keepaspectratio]{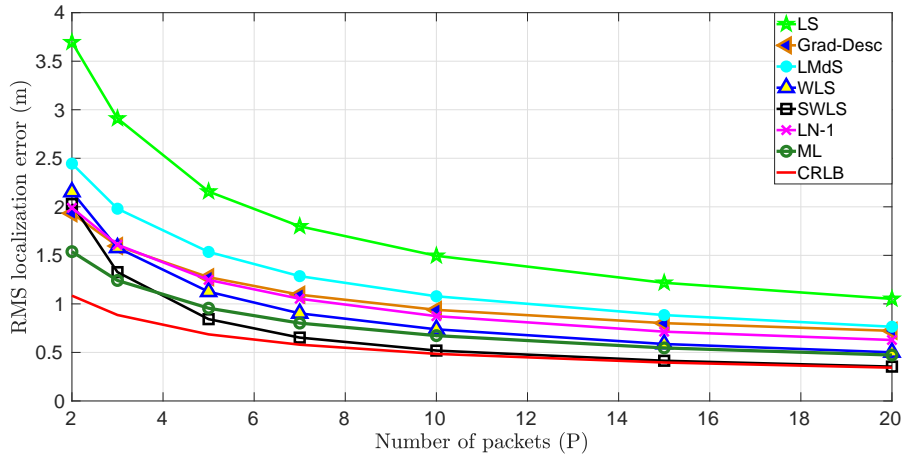}
\caption{RMS localization error as a function of the number of received packets ($P$) ($\sigma=2$~dB, $\sigma_{\text{att}}=6$~dB, and $28\%$ of the anchor nodes are malicious).}
\label{fig:nc_rms_packet}
\end{subfigure}
\begin{subfigure}[b]{\textwidth}
\centering
\includegraphics[width=12cm,keepaspectratio]{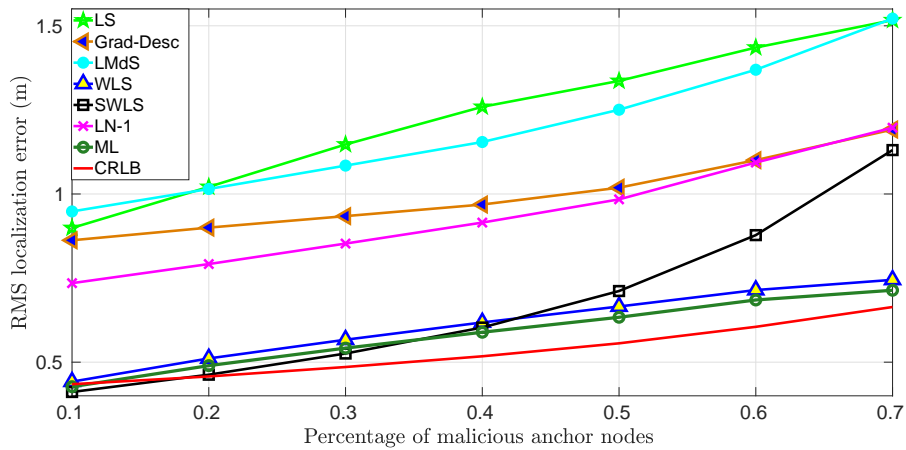}
\caption{RMS localization error as a function of the percentage of malicious anchor nodes ($\sigma=2$ dB, $\sigma_{\text{att}}=4$~dB, and $P= 10$ packets).}
\label{fig:nc_rms_per_mal_anchor}
\end{subfigure}
\caption{Performance of secure localization techniques as a function of the number of received packets and percentage of malicious anchors in uncoordinated attack.}
\label{fig:nc_rms_packet_per_mal_anchor}
\end{figure}
\subsection{Coordinated attack}
For coordinated attack, we consider the following localization techniques: WLS, Grad-Desc, LMdS, LN-1, and LN-1E. The other techniques result in poor localization performance under coordinated attack. SWLS does not perform well as it attempts to differentiate malicious anchor nodes from non-malicious ones based on the variation in the received power. This strategy fails as the malicious anchor nodes maintain a fixed transmit power in coordinated attack. Thus, we have not considered SWLS technique for the coordinated attack scenario. Fig.~\ref{fig:co_rms_dist} presents the localization performance for a coordinated attack scenario as the distance between $\mathbf{t}$ and $\mathbf{t_{\text{att}}}$ is increased. The network topology is the same as in Fig.~\ref{fig:topology_nc_random_anchor}. 
\begin{figure*}
\centering
%
\begin{subfigure}[]{\textwidth}
\centering
\includegraphics[width=10cm,keepaspectratio]{}
\caption{RMS localization error when 10\% of anchor nodes are malicious.}
\label{fig:co_rms_dist_pec_mal_1}
\end{subfigure}
\begin{subfigure}[]{\textwidth}
\centering
\includegraphics[width=10cm,keepaspectratio]{}
\caption{RMS localization error when $28\%$ of anchor nodes are malicious with the target node at location $(5,50)$.}
\label{fig:co_rms_dist_pec_mal_3_target_edge}
\end{subfigure}
\caption{Performance of secure localization techniques in coordinated attack scenario ($\sigma = 2$~dB and $P=10$ packets).}
\label{fig:co_rms_dist}
\end{figure*}
%

In Fig.~\ref{fig:co_rms_dist_pec_mal_1}, with $10\%$ malicious anchor nodes, WLS significantly outperforms the other techniques when the coordinated attack is mild. However, as the attack becomes stronger the WLS performance deteriorates rapidly. LN-1 and LN-1E result in similar performance and both outperform LMdS and Grad-Desc. The poor performance of LMdS and Grad-Desc is due to the elimination of certain anchor nodes from the localization process. When the percentage of malicious anchor nodes is low and $\mathbf{t_{\text{att}}}$ is close to $\mathbf{t}$, eliminating the malicious anchor nodes does not improve the localization accuracy as the measurement noise tends to determine the localization performance. 

In Fig.~\ref{fig:co_rms_dist_pec_mal_3_target_edge}, we show the performance of the localization techniques as the target node moves closer to the edge of the network. The performance of Grad-Desc degrades significantly, while the performance of LN-1E and LMdS are similar and degrade to a lesser extent. 
From~\eqref{eq:CRLB_coordinated}, it is noted that the FIM $\mathbf{F}_\text{c}$ does not directly depend on $\left\lVert\mathbf{t}-\mathbf{t_{\text{att}}}\right\rVert_2$, and thus the CRLB in Fig.~\ref{fig:co_rms_dist} is almost constant. It is also observed that Grad-Desc, LMdS, LN-1, and LN-1E follow a similar trend as the CRLB.
\begin{figure}
\centering
\begin{subfigure}[b]{\textwidth}
\centering
\includegraphics[width=10cm,keepaspectratio]{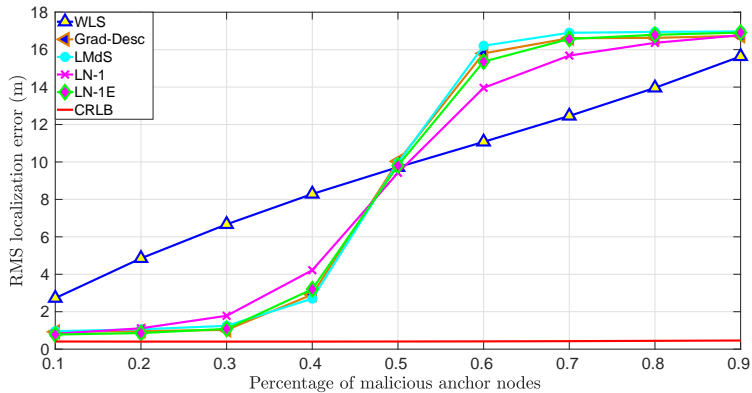}
\caption{RMS localization error with $\mathbf{t}_{\text{att}}=[t^x+12 \ t^y+12]^T$ and $\left\lVert\mathbf{t} - \mathbf{t_{\text{att}}}\right\rVert_2 = $16.97 m.}
\label{fig:co_rms_mal_anchor_dist_16}
\end{subfigure}
\begin{subfigure}[b]{\textwidth}
\centering
\includegraphics[width=10cm,keepaspectratio]{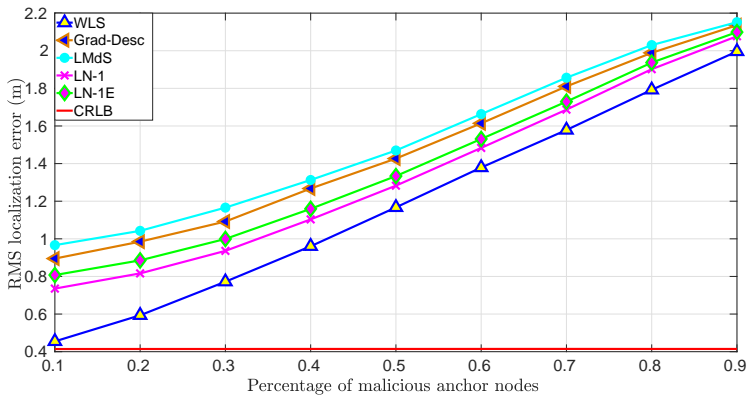}
\caption{RMS localization error with $\mathbf{t}_{\text{att}}=[t^x+1.5 \ t^y+1.5]^T$ and $\left\lVert\mathbf{t} - \mathbf{t_{\text{att}}}\right\rVert_2 = $2.12 m. }
\label{fig:co_rms_mal_anchor_dist_16_dist_2}
\end{subfigure}
\caption{Performance of the secure localization techniques in a coordinated attack as a function of the percentage of malicious anchor nodes ($\sigma = 2$~dB and $P=10$ packets).}
\label{fig:co_rms_mal_anchor}
\end{figure}

In Fig.~\ref{fig:co_rms_mal_anchor}, we study the localization performance under a coordinated attack as a function of the percentage of malicious anchor nodes. This has been simulated for two different values of $\mathbf{t}_{\text{att}}$ representing mild and severe forms of the attack. In Fig.~\ref{fig:co_rms_mal_anchor_dist_16}, it is seen that Grad-Desc, LMdS, LN-1, and LN-1E outperform WLS when the percentage of malicious anchor nodes is less than $50\%$. However, as the percentage of malicious anchor nodes goes above 50\% WLS outperforms the other techniques. Grad-Desc retains $50\%$ of the anchor nodes, and as the percentage of malicious anchor nodes exceeds $50\%$, it ends up using measurements from malicious nodes for localization. LMdS also fails to consistently find a subset containing only non-malicious anchor nodes. 
For LN-1 and LN-1E, the number of outliers increases with the increase in the percentage of malicious anchor nodes. This affects the robustness of $\ell_1$-norm optimization as the measurements from the malicious anchor nodes are assigned the same weight as measurements from non-malicious anchor nodes. Unlike Grad-Desc, LMdS, and LN-1E, LN-1 does not attempt to classify or cluster the measurements and instead it fits a continuous function to the measurements. Thus, when the percentage of malicious anchor nodes exceeds $50\%$, LN-1 achieves better performance than Grad-Desc, LMdS, and LN-1E. 

In Fig.~\ref{fig:co_rms_mal_anchor_dist_16_dist_2}, we consider a milder form of the coordinated attack where the malicious anchor nodes  attempt to make the target node appear at ($t^x+1.5$, $t^y+1.5$) instead of its true location ($t^x$, $t^y$). In this case, WLS outperforms the other techniques and it is followed by LN-1 and LN-1E. The performance of Grad-Desc, LMdS, and LN-1E is comparatively poor due to their anchor elimination process. From the CRLB expression for a coordinated attack~\eqref{eq:CRLB_coordinated}, it is seen that as the percentage of malicious anchor nodes increases, $\text{card}(\mathcal{A}_{\text{m}})$ increases and at the same time $\text{card}(\mathcal{A}_{\text{nm}})$ decreases, such that the CRLB remains almost constant. 
\subsection{Computational Complexity}
%
%
%
We next present a comparison of the computational complexity of the different localization techniques. Table~\ref{Tb:computational_complexity} shows the asymptotic complexities of the secure localization techniques considered in this work. For WLS, line 5 of Algorithm~\ref{Algo:WLMS} is the dominant computational step. 
Computing $\mathbf{\hat{q}}$ requires five matrix multiplications and one matrix inversion. Therefore, the computational complexity of WLS is given by $\mathcal{O}\left(3N^2+9N+3N+3^3+3^2+N^2\right) \ \simeq \ \mathcal{O}\left({N^2}\right) \nonumber$. Similarly, the computational complexity of LS is $\mathcal{O}\left({N^2}\right)$ and SWLS is $\mathcal{O}\left(\left(\text{card}(\mathcal{M})\right)^2 \right)$, where $\mathcal{M}$ is the set of non-malicious anchor nodes identified by SWLS. 
LN-1 executes (\ref{Eq:LN_norm1_ADMM_itr}), (\ref{Eq:LN_norm1_ADMM_itr_b}), and (\ref{Eq:LN_norm1_ADMM_itr_c}) in an iterative manner until convergence is achieved. Assuming ADMM requires $k_{\text{ADMM}}$ iterations on average to converge, and the computational complexity of (\ref{Eq:LN_norm1_ADMM_itr}) is $\mathcal{O} \left(N\right)$, the computational complexity for the ADMM steps is $\mathcal{O} \left( k_{\text{ADMM}}N\right)$. Thus, the computational complexity of LN-1 is $\mathcal{O} \left( \max \left( k_{\text{ADMM}}N, N^2 \right)\right)  \simeq \mathcal{O} \left(  k_{\text{ADMM}}N\right)$, as $k_{\text{ADMM}} > N$ in general. In addition to the steps in LN-1, LN-1E involves K-means clustering which has a complexity of $\mathcal{O}\left( N T k_{\text{K-means}}\right)$~\cite{k_menas_time}, where $k_{\text{K-means}}$ is the number of average iterations and $T$ is the complexity for calculating the distance between two data points. The computational complexity of LN-1E is  $\mathcal{O}\left( \max \left( N T k_{\text{K-means}},k_{\text{ADMM}}N\right)\right)$. LMdS algorithm involves two main operations: (i) dividing the RSSI measurements into $M_{\text{LMdS}}$ subsets with each subset consisting of $N_{\text{LMdS}}$ anchor nodes, and the target location is estimated for each subset using LS technique, and (ii) computing median of the residue of the results obtained from each of the subsets. The computational complexity of the first operation is $\mathcal{O}\left(M_{\text{LMdS}}N_{\text{LMdS}}^2\right)$ and the second operation is $\mathcal{O}\left(M_{\text{LMdS}}N^2\right)$.
Since $N > N_{\text{LMdS}}$ in general, LMdS has a computational complexity of $\mathcal{O}\left(M_{\text{LMdS}}{N^2}\right)$. 
The computational complexity of Grad-Desc is $\mathcal{O}\left(k_{\text{GD}}{N}\right)$ where $k_{\text{GD}}$ is the total number of iterations. 
From Table~\ref{Tb:computational_complexity} it can be observed that the computational complexity of LN-1, LN-1E, and Grad-Desc varies linearly with the number of anchor nodes in the network.  
\begin{table}[]
\captionsetup{justification=centering, labelsep=newline}
\centering
\caption{Computational complexity of the different localization techniques.}
\label{Tb:computational_complexity}
\begin{tabular}{|c|c|c|c|c|c|c|}
\hline
LS & WLS & SWLS & LN-1 & LN-1E & Grad-Desc & LMdS\\ 
\hline
\multicolumn{2}{|c|}{$\mathcal{O}\left(N^2\right)$} &
$\mathcal{O}\left( \text{card}(\mathcal{M})^2 \right)$ &
$\mathcal{O} \left(  k_{\text{ADMM}}N\right)$ &  $\mathcal{O}\left( \max \left( N T k_{\text{K-means}},k_{\text{ADMM}}N\right)\right)$ & $\mathcal{O}\left(k_{\text{GD}}N\right)$ & $\mathcal{O}\left(M_{\text{LMdS}}N^2\right)$\\ 
\hline
\end{tabular}
\end{table}
\section{Conclusion}
\label{Sec:conclusion}
In this paper, we presented localization techniques that are robust in the presence of malicious anchor nodes in the network. We proposed four secure localization techniques WLS, SWLS, LN-1, and LN-1E, and compared their performance with the existing techniques Grad-Desc and LMdS. Two types of attacks were considered: uncoordinated and coordinated. All nodes in the network are assumed to transmit at a fixed power level which is known to the target node. The localization attacks are executed by the malicious anchor nodes by changing their transmit power and not reporting it to the target node. For uncoordinated attacks, the localization performance was studied with variation in the transmit power of the malicious anchor nodes, location of the malicious anchor nodes (close or far from the target node), number of packets $P$, and percentage of malicious anchor nodes in the network. SWLS outperformed all the other techniques and was close to the CRLB in most cases. The performance of WLS was found to deteriorate for $\sigma_{\text{att}}>\sigma$ or when many malicious anchor nodes are located close to the target node. On the other hand, LN-1 is neither affected by an increase in $\sigma_{\text{att}}$ nor by a change in the locations of the malicious anchor nodes. For coordinated attacks, the localization performance was studied as a function of the severity of the attack in terms of the distance between the actual and reported locations of the target node and variation in the percentage of malicious anchor nodes. LN-1 and LN-1E perform better than the other techniques with lower percentage of malicious anchor nodes and under milder form of the coordinated attack. WLS outperforms all other techniques when the attack is mild or the percentage of malicious anchor nodes exceeds $50\%$. The computational complexity of LN-1 and LN-1E varies linearly with the number of anchor nodes in the network. 
%
%
\appendices  
\section{Proof of~\eqref{Eq:close_form_sigma_est}}
\label{appendix}
%
Let $l(\sigma_{\text{est}}) \triangleq  v - f^{\text{var}}_{d} (\overline{d_i},\sigma_{\text{est}}) $ where $v=\text{Var}(d_{i1}, d_{i2}, \dots, d_{iP})$. It can be shown that:
%
\begin{IEEEeqnarray}{lCl}
\label{Eq:double_differentiate}
 \frac{\partial^2 f^{\text{Var}}_{\text{d}_i}(\overline{d_i},\sigma_{\text{est}})}{\partial\sigma_{\text{est}}^2}   = 
 \underbrace{\frac{2 \overline{d_i}^2}{18.86n^2}
 \exp\left(\frac{\sigma^2_{\text{est}} }{18.86n^2}\right)}_{k_1}
 \left[  \underbrace{ 2 \exp\left(\frac{\sigma^2_{\text{est}} }{18.86n^2}\right)}_{k_2} \underbrace{\left(1+\frac{4 \sigma^2_{\text{est}}}{18.86n^2}\right)}_{k_3} - \underbrace{\left(1+\frac{2 \sigma^2_{\text{est}}}{18.86n^2}\right)}_{k_4}
 \right]\nonumber
\end{IEEEeqnarray}
%
where $k_2k_3-k_4$ is always positive because $k_3 \geq k_4$ and $k_2 \geq 2$. 
So, $\frac{\partial^2 f^{\text{Var}}_{\text{d}_i}(\overline{d_i},\sigma_{\text{est}})}{\partial\sigma_{\text{est}}^2} \geq 0$ and  $f^{\text{Var}}_{\text{d}_i}(\overline{d_i},\sigma_{\text{est}})$ is a convex function. Therefore, $l(\sigma_{\text{est}})$ is a concave function. 
\begin{remark}
If function $f:\mathbb{R}^+_0 \rightarrow \mathbb{R}$  is even, concave, $f(x=0) \geq 0$, and $\exists \epsilon \left( \geq 0 \right)$ such that $f(x) \leq \epsilon $ then $x^*=\underset{x \geq 0}{\arg\min} |f(x)| \implies f(x^*) = 0$.
\end{remark}
Using the above remark, optimal value $\left(\hat{\sigma}_{\text{est}}\right)$ of the optimization problem in line 5 of Algorithm~\ref{Algo:SWLMS} can be obtained by solving $l(\hat{\sigma}_{\text{est}}) = 0$ as $l({\sigma}_{\text{est}})$ is concave, even function with
$l(\sigma_{\text{est}}=0)=v$ as $f^{\text{var}}_{d} (\overline{d_i},\sigma_{\text{est}}=0) = 0$, and is upper bounded by a non-negative constant $\left( l(\sigma_{\text{est}}) \leq v, \text{ as } f^{\text{var}}_{d} (\overline{d_i},\sigma_{\text{est}}) \geq 0 \right)$. The objective function of the optimization problem under consideration i.e., $|l(\sigma_{\text{est}})|$ is neither convex nor concave when $\sigma_{\text{est}} \geq 0$ as  $|l(\sigma_{\text{est}})|$ is concave when $\sigma_{\text{est}} \in [0,\hat{\sigma}_{\text{est}}]$ and convex when  $\sigma_{\text{est}} \in \left[\hat{\sigma}_{\text{est}},\infty\right)$.    
\bibliographystyle{IEEEtran}
\bibliography{reference}

\begin{thebibliography}{10}
\providecommand{\url}[1]{#1}
\csname url@samestyle\endcsname
\providecommand{\newblock}{\relax}
\providecommand{\bibinfo}[2]{#2}
\providecommand{\BIBentrySTDinterwordspacing}{\spaceskip=0pt\relax}
\providecommand{\BIBentryALTinterwordstretchfactor}{4}
\providecommand{\BIBentryALTinterwordspacing}{\spaceskip=\fontdimen2\font plus
\BIBentryALTinterwordstretchfactor\fontdimen3\font minus
  \fontdimen4\font\relax}
\providecommand{\BIBforeignlanguage}[2]{{%
\expandafter\ifx\csname l@#1\endcsname\relax
\typeout{** WARNING: IEEEtran.bst: No hyphenation pattern has been}%
\typeout{** loaded for the language `#1'. Using the pattern for}%
\typeout{** the default language instead.}%
\else
\language=\csname l@#1\endcsname
\fi
#2}}
\providecommand{\BIBdecl}{\relax}
\BIBdecl

\bibitem{localization_SDP_jiang}
J.~{Jiang}, G.~{Wang}, and K.~C. {Ho}, ``Sensor network-based rigid body
  localization via semi-definite relaxation using arrival time and doppler
  measurements,'' \emph{IEEE Trans. Wireless Commun.}, vol.~18, no.~2, pp.
  1011--1025, Feb. 2019.

\bibitem{Salari_cooperative_loc}
S.~{Salari}, I.~{Kim}, and F.~{Chan}, ``Distributed cooperative localization
  for mobile wireless sensor networks,'' \emph{IEEE Wireless Commun. Lett.},
  vol.~7, no.~1, pp. 18--21, Feb. 2018.

\bibitem{wang_loc_cooperative}
Z.~{Wang}, H.~{Zhang}, T.~{Lu}, and T.~A. {Gulliver}, ``Cooperative rss-based
  localization in wireless sensor networks using relative error estimation and
  semidefinite programming,'' \emph{IEEE Trans. Veh. Technol.}, vol.~68, no.~1,
  pp. 483--497, Jan. 2019.

\bibitem{graph_based}
S.~Srirangarajan and D.~Pesch, ``Source localization using graph-based
  optimization technique,'' in \emph{IEEE Wireless Commun. and Netw. Conf.
  (WCNC)}, Apr. 2013, pp. 1127--1132.

\bibitem{shen_locating_2016}
Z.~Shen, S.~Cao, W.-X. Wang, Z.~Di, and H.~E. Stanley, ``Locating the source of
  diffusion in complex networks by time-reversal backward spreading,''
  \emph{Physical Review E}, vol.~93, no.~3, pp. 1--9, Mar. 2016.

\bibitem{lee_distributed_2017}
W.~H. Lee, J.~Choi, J.~H. Lee, Y.~H. Kim, and S.~C. Kim, ``Distributed power
  control-based connectivity reconstruction game in wireless localization,''
  \emph{IEEE Commun. Lett.}, vol.~21, no.~2, pp. 334--337, Feb. 2017.

\bibitem{kay1993fundamentals}
S.~M. Kay, \emph{Fundamentals of statistical signal processing}.\hskip 1em plus
  0.5em minus 0.4em\relax Prentice Hall PTR, 1993.

\bibitem{Rappaport:2001:WCP:559977}
T.~Rappaport, \emph{Wireless {C}ommunications: {P}rinciples and {P}ractice},
  2nd~ed.\hskip 1em plus 0.5em minus 0.4em\relax Upper Saddle River, NJ, USA:
  Prentice Hall PTR, 2001.

\bibitem{Jang_localization_survey}
B.~{Jang} and H.~{Kim}, ``Indoor positioning technologies without offline
  fingerprinting map: A survey,'' \emph{IEEE Commun. Surveys Tuts.}, vol.~21,
  no.~1, pp. 508--525, Firstquarter 2019.

\bibitem{survey_fingerprint}
A.~Yassin, Y.~Nasser, M.~Awad, A.~Al-Dubai, R.~Liu, C.~Yuen, R.~Raulefs, and
  E.~Aboutanios, ``Recent advances in indoor localization: A survey on
  theoretical approaches and applications,'' \emph{IEEE Commun. Surveys Tuts.},
  vol.~19, no.~2, pp. 1327--1346, Second quarter 2017.

\bibitem{ncc2015_bodhi}
B.~Mukhopadhyay, S.~Sarangi, and S.~Kar, ``{Performance evaluation of
  localization techniques in wireless sensor networks using RSSI and LQI},'' in
  \emph{Nat. Conf. Commun. (NCC)}, Feb. 2015, pp. 1--6.

\bibitem{seshan_rssi}
S.~Srirangarajan, A.~H. Tewfik, and Z.~Q. Luo, ``Distributed sensor network
  localization using {SOCP} relaxation,'' \emph{IEEE Trans. Wireless Commun.},
  vol.~7, no.~12, pp. 4886--4895, Dec. 2008.

\bibitem{A_Survey_on_TOA}
I.~Guvenc and C.~C. Chong, ``A survey on {TOA} based wireless localization and
  {NLOS} mitigation techniques,'' \emph{Commun. Surveys Tuts.}, vol.~11, no.~3,
  pp. 107--124, 3rd Quarter 2009.

\bibitem{AoA_localization_with_RSSI_differences_of_directional_Jiang}
J.-R. Jiang, C.-M. Lin, F.-Y. Lin, and S.-T. Huang, ``{ALRD}: {AoA}
  localization with {RSSI} differences of directional antennas for wireless
  sensor networks,'' in \emph{Int. Conf. Inf. Society (i-Society)}, Jun. 2012,
  pp. 304--309.

\bibitem{Range_free_localization_schemes_for_large_scalee_He}
T.~He, C.~Huang, B.~M. Blum, J.~A. Stankovic, and T.~Abdelzaher, ``Range-free
  localization schemes for large scale sensor networks,'' in \emph{Proc. Int.
  Conf. Mobile Computing and Netw. (MobiCom)}, Sep. 2003, pp. 81--95.

\bibitem{range_freelocalization_Stoleru1}
R.~Stoleru, T.~He, and J.~A. Stankovic, ``Range-free localization,'' in
  \emph{Secure Localization and Time Synchronization for Wireless Sensor and Ad
  Hoc Networks}.\hskip 1em plus 0.5em minus 0.4em\relax Springer US, 2007, pp.
  3--31.

\bibitem{WSN_cattle_sikka}
P.~{Sikka}, P.~{Corke}, P.~{Valencia}, C.~{Crossman}, D.~{Swain}, and
  G.~{Bishop-Hurley}, ``Wireless ad hoc sensor and actuator networks on the
  farm,'' in \emph{Proc. 5th Int. Conf. Inf. Processing Sensor Netw. (IPSN)},
  Apr. 2006, pp. 492--499.

\bibitem{air_quality_monitoring}
Y.~C. Wang and G.~W. Chen, ``Efficient data gathering and estimation for
  metropolitan air quality monitoring by using vehicular sensor networks,''
  \emph{IEEE Trans. Veh. Technol.}, vol.~66, no.~8, pp. 7234--7248, Aug. 2017.

\bibitem{Costa2013}
C.~Costa, F.~Antonucci, F.~Pallottino, J.~Aguzzi, D.~Sarri{\'a}, and
  P.~Menesatti, ``A review on agri-food supply chain traceability by means of
  {RFID} technology,'' \emph{Food and Bioprocess Technology}, vol.~6, no.~2,
  pp. 353--366, Feb. 2013.

\bibitem{landslide_monitering}
Y.~Wang, Z.~Liu, D.~Wang, Y.~Li, and J.~Yan, ``Anomaly detection and visual
  perception for landslide monitoring based on a heterogeneous sensor
  network,'' \emph{IEEE Sensors J.}, vol.~17, no.~13, pp. 4248--4257, Jul.
  2017.

\bibitem{NCC2014_bodhi}
B.~Mukhopadhyay, S.~Sarangi, and S.~Kar, ``{Novel RSSI evaluation models for
  accurate indoor localization with sensor networks},'' in \emph{Nat. Conf.
  Commun. (NCC)}, Feb. 2014, pp. 1--6.

\bibitem{DVhop_bao}
X.~Bao, F.~Bao, S.~Zhang, and L.~Liu, ``An improved {DV-Hop} localization
  algorithm for wireless sensor networks,'' in \emph{6th Int. Conf. Wireless
  Commun. Netw. Mobile Comput. (WiCOM)}, Sep. 2010, pp. 1--4.

\bibitem{location_authen_gonzalez}
A.~I. G.-T. Ferreres, B.~R. Alvarez, and A.~R. Garnacho, ``Guaranteeing the
  authenticity of location information,'' \emph{IEEE Pervasive Comput.},
  vol.~7, no.~3, pp. 72--80, Jul. 2008.

\bibitem{Boano_temp_impact}
C.~A. {Boano}, N.~{Tsiftes}, T.~{Voigt}, J.~{Brown}, and U.~{Roedig}, ``The
  impact of temperature on outdoor industrial sensornet applications,''
  \emph{IEEE Trans. Ind. Informat.}, vol.~6, no.~3, pp. 451--459, Aug. 2010.

\bibitem{Baccour_2012}
N.~Baccour, A.~Koub\^{a}a, L.~Mottola, M.~A. Z\'{u}\~{n}iga, H.~Youssef, C.~A.
  Boano, and M.~Alves, ``Radio link quality estimation in wireless sensor
  networks: A survey,'' \emph{ACM Trans. Sen. Netw.}, vol.~8, no.~4, pp.
  34:1--34:33, Sep. 2012.

\bibitem{gps_spoofing_Kugler}
L.~Kugler, ``Why {GPS} spoofing is a threat to companies, countries,''
  \emph{Commun. ACM}, vol.~60, no.~9, pp. 18--19, Aug. 2017.

\bibitem{geofencing}
U.~Bareth, ``Privacy-aware and energy-efficient geofencing through reverse
  cellular positioning,'' in \emph{8th Int. Wireless Commun. Mobile Comput.
  Conf. (IWCMC)}, Aug. 2012, pp. 153--158.

\bibitem{mukhopadhyay2018robust}
B.~Mukhopadhyay, S.~Srirangarajan, and S.~Kar, ``Robust range-based secure
  localization in wireless sensor networks,'' in \emph{IEEE Global Commun.
  Conf. (GLOBECOM)}, Dec. 2018, pp. 1--6.

\bibitem{secure_loc_garg}
R.~Garg, A.~L. Varna, and M.~Wu, ``An efficient gradient descent approach to
  secure localization in resource constrained wireless sensor networks,''
  \emph{IEEE Trans. Inf. Forensics Security}, vol.~7, no.~2, pp. 717--730, Apr.
  2012.

\bibitem{median_sq_error_li}
Z.~Li, W.~Trappe, Y.~Zhang, and B.~Nath, ``Robust statistical methods for
  securing wireless localization in sensor networks,'' in \emph{Fourth Int.
  Symposium Inf. Process. Sensor Netw.}, Apr. 2005, pp. 91--98.

\bibitem{serloc_lazos}
L.~Lazos and R.~Poovendran, ``{SeRLoc}: Robust localization for wireless sensor
  networks,'' \emph{ACM Trans. Sen. Netw.}, vol.~1, no.~1, pp. 73--100, Aug.
  2005.

\bibitem{voting_secure_liu_2}
D.~Liu, P.~Ning, A.~Liu, C.~Wang, and W.~K. Du, ``Attack-resistant location
  estimation in wireless sensor networks,'' \emph{ACM Trans. Inf. Syst.
  Secur.}, vol.~11, no.~4, pp. 22:1--22:39, Jul. 2008.

\bibitem{secure_wns_gt_jha}
S.~Jha, S.~Tripakis, S.~A. Seshia, and K.~Chatterjee, ``Game theoretic secure
  localization in wireless sensor networks,'' in \emph{Int. Conf. the Internet
  Things (IOT)}, Oct. 2014, pp. 85--90.

\bibitem{boyd2004convex}
S.~Boyd and L.~Vandenberghe, \emph{Convex optimization}.\hskip 1em plus 0.5em
  minus 0.4em\relax Cambridge University Press, 2004.

\bibitem{baraniuk2011introduction}
R.~Baraniuk, M.~A. Davenport, M.~F. Duarte, and C.~Hegde, ``An introduction to
  compressive sensing,'' \emph{Connexions e-textbook}, 2011.

\bibitem{boyd2011distributed}
S.~Boyd, N.~Parikh, E.~Chu, B.~Peleato, J.~Eckstein \emph{et~al.},
  ``Distributed optimization and statistical learning via the alternating
  direction method of multipliers,'' \emph{Foundations and Trends in Machine
  learning}, vol.~3, no.~1, pp. 1--122, Jan. 2011.

\bibitem{note_ADMM_Han}
D.~Han and X.~Yuan, ``A note on the alternating direction method of
  multipliers,'' \emph{J. Optim. Theory Appl.}, vol. 155, no.~1, pp. 227--238,
  Oct. 2012.

\bibitem{parikh2014proximal}
N.~Parikh, S.~Boyd \emph{et~al.}, ``Proximal algorithms,'' \emph{Foundations
  and Trends in Optimization}, vol.~1, no.~3, pp. 127--239, Jan. 2014.

\bibitem{Bishop_PRM}
C.~M. Bishop, \emph{Pattern Recognition and Machine Learning (Information
  Science and Statistics)}.\hskip 1em plus 0.5em minus 0.4em\relax Secaucus,
  NJ, USA: Springer-Verlag New York, Inc., 2006.

\bibitem{crlb_Ouyang}
R.~W. Ouyang, A.~K.~S. Wong, and C.~T. Lea, ``Received signal strength-based
  wireless localization via semidefinite programming: {N}oncooperative and
  cooperative schemes,'' \emph{IEEE Trans. Veh. Technol.}, vol.~59, no.~3, pp.
  1307--1318, Mar. 2010.

\bibitem{linerarization_Yu}
K.~Yu and Y.~J. Guo, ``{NLOS} error mitigation for mobile location estimation
  in wireless networks,'' in \emph{IEEE 65th Vehicular Technology
  Conf.-Spring}, Apr. 2007, pp. 1071--1075.

\bibitem{grad_ravi_garg}
R.~Garg, A.~L. Varna, and M.~Wu, ``Gradient descent approach for secure
  localization in resource constrained wireless sensor networks,'' in
  \emph{IEEE Int. Conf. Acoustics, Speech and Signal Process.}, Mar. 2010, pp.
  1854--1857.

\bibitem{path_loss}
X.~{Zhang} and J.~G. {Andrews}, ``Downlink cellular network analysis with
  multi-slope path loss models,'' \emph{IEEE Trans. Commun.}, vol.~63, no.~5,
  pp. 1881--1894, May 2015.

\bibitem{broyden1970quasi}
C.~Broyden and W.~Murray, ``Quasi-newton methods,'' \emph{Numerical Methods for
  unconstrained Optimization, W. Murray, Ed. New York: Academic I}, vol. 972,
  1970.

\bibitem{k_menas_time}
P.~I. Dalatu, ``Time complexity of k-means and k-medians clustering algorithms
  in outliers detection,'' \emph{Global Journal of Pure and Applied
  Mathematics}, vol.~12, no.~5, pp. 4405--4418, 2016.

\end{thebibliography}

\end{document}